\newtheorem{lemma}{Lemma}
\newtheorem{corollary}{Corollary}
\newtheorem{theorem}{Theorem}
\newcommand{\defn}{\emph}
\newcommand{\pred}{\hat{\sigma}}
\newcommand{\ind}{\text{index}}
\newcommand{\calT}{\mathcal{T}}
\newcommand{\polylog}{\text{polylog}}
\newcommand{\rebuild}{\text{rebuild}}
\title{Incremental Approximate Single-Source Shortest Paths with Predictions}
\author{
Samuel McCauley\thanks{Supported in part by NSF CCF 2103813}\\Williams College\\ \texttt{sam@cs.williams.edu}
\and 
Benjamin Moseley\thanks{Supported in part by Google Research Award, an Infor Research Award, a Carnegie Bosch Junior Faculty Chair, NSF grants CCF-2121744 and CCF-1845146.}\\ Carnegie Mellon University\\ \texttt{moseleyb@andrew.cmu.edu}  
\and 
Aidin Niaparast\thanks{Supported in part by the Air Force Office of
Scientific Research under award number FA9550-23-1-0031 and NSF award CCF-1845146.
} \\ Carnegie Mellon University\\ \texttt{aniapara@andrew.cmu.edu}
\and 
Helia Niaparast\thanks{ Supported in part by NSF CCF-1845146.
} \\ Carnegie Mellon University\\ \texttt{hniapara@andrew.cmu.edu}
\and
Shikha Singh\thanks{Supported in part by NSF CCF 1947789}\\Williams College\\  \texttt{shikha@cs.williams.edu}
}
\date{}
\begin{document}

\maketitle

\begin{abstract}
The algorithms-with-predictions framework has been used extensively to develop online algorithms with improved beyond-worst-case competitive ratios.  Recently, there is growing interest in leveraging predictions for designing data structures with improved beyond-worst-case running times.
In this paper, we study the fundamental data structure problem of maintaining approximate shortest paths in incremental graphs in the algorithms-with-predictions model.
Given a sequence $\sigma$ of edges that are inserted one at a time,  the goal is to maintain approximate shortest paths from the source to each vertex in the graph at each time step.   
Before any edges arrive, the data structure is given a prediction of the online edge sequence $\hat{\sigma}$ which is used to ``warm start'' its state.   

As our main result, we design a learned algorithm that maintains $(1+\epsilon)$-approximate single-source shortest paths, which runs in $\tilde{O}(m \eta \log W/\epsilon)$ time, where $W$ is the weight of the heaviest edge and $\eta$ is the prediction error.  We show these techniques immediately extend to the all-pairs shortest-path setting as well.
Our algorithms are consistent (performing nearly as fast as the offline algorithm) when predictions are nearly perfect, have a smooth degradation in performance with respect to the prediction error and, in the worst case, match the best offline algorithm up to logarithmic factors. That is, the algorithms are ``ideal'' in the algorithms-with-predictions model.

As a building block, we study the \emph{offline incremental} approximate single-source shortest-path (SSSP) problem.   In the offline incremental SSSP problem, the edge sequence $\sigma$ is known a priori and the goal is to construct a data structure that can efficiently return the length of the shortest paths in the intermediate graph $G_t$ consisting of the first $t$ edges, for all $t$.
Note that the offline incremental problem is defined in the worst-case setting (without predictions) and is of independent interest. 
\end{abstract}
\section{Introduction}
\label{sec:introduction}
The efficiency of algorithms is typically measured in the worst-case model.  The worst-case model makes a fundamental assumption that algorithmic problems are solved from scratch. In real applications, many problems are solved repeatedly on similar input instances.  There has been a growing interest in improving the running time of algorithms by leveraging similarities across problem instances.  The goal is to \emph{warm start} the algorithm; that is, initialize it with a machine-learned state, speeding up its running time.  This
initial state is learned from the past instances of the problem.   

This recent line of work has given rise to a widely-applicable model for beyond-worst-case running time analysis.   The area has come to be known as \defn{algorithms with predictions}. In the algorithms-with-predictions model, the goal is to establish strong worst-case-style guarantees for the algorithm.  The critical difference is that the running time of the algorithm is \emph{parameterized} by the quality of the learned starting state; that is, the prediction quality. The algorithms designed in this model are also frequently referred to as \emph{learning-augmented} or simply \emph{learned} algorithms.

Ideally, an algorithm outperforms the best worst-case algorithm with high quality predictions (i.e. the algorithm is \defn{consistent}). When the predictions are incorrect,  the algorithm's performance should degrade proportionally to the error (i.e. the algorithm should be \defn{smooth}) and never be worse than the best worst-case algorithm (i.e. the algorithm is \defn{robust}). If an algorithm is consistent, smooth, and robust, it is called an \defn{ideal learned algorithm}. 

Kraska et al.~\cite{KraskaBCDP18}  \emph{empirically} demonstrated how machine-learned predictions can speed up data structures.  
This seminal paper inspired Dinitz et al.~\cite{DinitzILMV21} to propose a \emph{theoretical} framework to leverage predictions to speed up \emph{offline} algorithms.
Since this work, several papers have used predictions to improve the running time of offline combinatorial optimization problems such as maximum flow~\cite{DaviesMVW,POLAK2024106487}, shortest path~\cite{LattanziSV23}, and convex optimization~\cite{SakaueO22}. Recently, Srinivas and Blum \cite{SrinivasB25} give a framework for utilizing multiple offline predictions to improve the running time of online optimization problems.
This growing body of theoretical and empirical work shows the incredible potential of using machine-learned predictions to improve the efficiency of algorithms. The prediction framework provides a rich and algorithmically interesting landscape that is yet to be understood. 

Data structures are a critical building block for dynamic algorithms for optimization problems.   
There is a growing body of theoretical work on designing ideal algorithms for data structure problems~\cite{LinLW22,
McCauleyMNS23,
McCauleyMoNi24,
BrandFNP24, liu2023predicted, 
benomar2024learningaugmented, 
zeynali2024robust, bai2023sorting,
dinitz2024binary}; see Section~\ref{sec:related} for details. 
Predictions have been particularly effective at speeding up 
data structures for dynamic graph problems, which typically incur polynomial update times in the worst case.  
In particular, McCauley et al.~\cite{McCauleyMoNi24} use
predictions to design faster data structures for incremental topological maintenance and cycle detection.  
Henzinger et al.~\cite{HenzingerSSY24} and van den Brand et al.~\cite{BrandFNP24} initiate the use of predictions for designing faster dynamic graph algorithms.
In particular, van den Brand~\cite{BrandFNP24} solve the online matrix-vector multiplication with predictions and use it to obtain faster algorithms for several dynamic graph problems such as incremental all-pairs shortest paths, reachability, and triangle detection. Liu and Srinivas~\cite{liu2023predicted} give efficient black-box transformations from  offline divide-and-conquer style algorithms to fully
dynamic learned algorithms that are given predictions about the update sequence.  
These initial results demonstrate incredible potential and there is a need to develop algorithmic techniques for designing data structures that can leverage predictions.  
Towards this goal, in this paper, we study the fundamental data structure problem of maintaining approximate single-source shortest paths in dynamic graphs with edge insertions. No learned data structure has been developed for this problem despite being a clear target in the area. 

\paragraph{Incremental Single-Source Shortest Paths.}  In this paper, we design a data structure to maintain shortest paths in a weighted directed graph when edges are inserted over time. 
Initially, all nodes $V$ of the graph are available and, in the single-source case, a source $s$ is specified. There are $m$ edges that arrive one by one:  at each time step $t$, an edge (with a positive weight) is added to the graph.   The goal is to design a data structure that approximately stores the shortest path distance from the source $s$ to every other vertex $v$ in the graph.   
Let $d^t(s,v)$ be the distance between $s$ and $v$ after $t$ edge insertions.  Let $\hat{d}^t(s,v)$ be an approximation of $d^t(s,v)$  computed by the algorithm after $t$ edges arrive. The algorithm needs to efficiently compute $\hat{d}^t(s,v)$ such that $d^t(s,v) \leq \hat{d}^t(s,v) \leq (1 + \epsilon) d^t(s,v)$ for some constant $\epsilon  > 0$.  

Incremental shortest paths is a fundamental algorithmic problem used in applications as well as a building block for other data structures~\cite{RodittyZ11,HenzingerKN16}. 
This problem has been extensively studied in the literature~\cite{GutenbergW20,BernsteinGS21,BernsteinGW20,KyngMeGu22,chechik2021incremental,henzinger2014sublinear, henzinger2015improved}. 
The best-known worst-case update times for the problem are $\tilde{O}(n
^2 \log W/\epsilon^{2.5} + m)$~\cite{probst2020new} for dense graphs and a $\tilde{O}(m^{4/3}\log W/\epsilon^2)$ algorithm~\cite{KyngMeGu22} for sparse graphs.\footnote{The $\tilde{O}$ notation suppresses log factors.}  

Recently, Henzinger et al.~\cite{HenzingerSSY24} and van den Brand~\cite{BrandFNP24} applied predictions to the problem of computing all-pairs shortest paths (APSP) in incremental graphs.  
We follow their prediction model in which the data structure is given a prediction of the online edge sequence $\hat{\sigma}$ before any edges arrive.  The performance of the algorithm is given in terms of the prediction error.  Define an edge $e$'s error $\eta_e$ as the difference between its arrival time in $\hat{\sigma}$ and $\sigma$ and the aggregate error $\eta$ as the $\max_{e} \eta_e$.  They show that for the incremental APSP problem, predictions can be used to support $O(1)$-lookup time and $O(\eta^2)$ time per edge insert, which is optimal under the Online Matrix-Vector Multiplication Hypothesis~\cite{BrandFNP24}.  The preprocessing time used in \cite{HenzingerSSY24} is $O(mn^3)$, and it is $O(n^{(3 + \omega)/2})$ in \cite{BrandFNP24} where $\omega$ is the exponent from matrix multiplication. 

Their work leaves open an important question---can predictions also help speed up the related and fundamental problem of maintaining approximate \emph{single-source} shortest paths under edge inserts, which has not yet been studied in this framework.
 
\paragraph{Our Contributions.}   The main contribution of this paper is a new learned data structure for the incremental approximate SSSP problem and a demonstration that it is ideal (consistent, robust, and smooth) with respect to the prediction error.
As a building block, we study the \emph{offline} version of the problem that has not previously been considered, which is of independent interest. 

We show that these techniques extend to the all-pairs shortest-path problem as well. 

\paragraph{Offline Incremental Single-Source Shortest Paths.}  We give a new algorithm for the \emph{offline} version of the incremental shortest path problem. In the offline version of the problem, the sequence of arriving edges is given in advance where each edge is assigned a unique time.  The goal is to maintain approximate distances $\hat{d}^t(s,v)$ for all $v$ and all times $t$ as efficiently as possible.  That is, given a query $(v,t)$, the data
structure outputs the approximate shortest path from source $s$ to
vertex $v$ in the graph with edges inserted up to time $t$.
By reversing time, the incremental and decremental versions of this problem are immediately equivalent.

Surprisingly, to our knowledge, past work in the offline setting has focused solely on exact, rather than approximate, incremental shortest path.  These exact versions have strong lower bounds. Roddity and Zwick~\cite{RodittyZ11} show that for the incremental/decremental single-source shortest paths problem in weighted directed (or undirected) graphs, the amortized query/update time must be $n^{1-o(1)}$, unless APSP can be solved in truly subcubic time (i.e. $n^{3-\epsilon}$ for constant $\epsilon>0$). 

We show that the offline \emph{approximate} version of the problem can be solved significantly faster than the exact version in the worst-case setting.  This natural problem reveals key algorithmic ideas for designing our learned online SSSP algorithm.

\begin{theorem}
\label{thm:offline-approx}
For the offline incremental SSSP problem there exists an algorithm running in worst-case total time $O(m \log (nW)  (\log^3 n)(\log \log n)/\epsilon)$ that returns $(1+\epsilon)$ approximate single-source shortest paths for each time $t$.
\end{theorem}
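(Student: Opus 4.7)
The plan is to exploit both the offline knowledge of the insertion sequence and the monotonicity of shortest-path distances under edge insertions. A first key observation is that since every exact distance lies in $[1,nW]$ and we only need a $(1+\epsilon)$-approximation, each vertex $v$'s approximate distance function $t \mapsto \hatd^t(s,v)$ is monotone non-increasing in $t$ and attains at most $L = O(\log(nW)/\epsilon)$ distinct values. So the total ``output size'' is $O(nL)$ breakpoints, and the goal is to produce them all in time $\tilde{O}(mL)$.

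The approach is a divide-and-conquer on the time axis, inspired by offline range-query techniques. Define $\textsc{Solve}(l,r)$ as: given the approximate distances at time $l$, compute the approximate distances at every $t \in (l,r]$. We split at $\text{mid} = \lfloor(l+r)/2\rfloor$, first invoke a batched approximate-SSSP subroutine on the edges inserted during $(l,\text{mid}]$ (taking the tentative distances at time $l$ as the initial state) to obtain distances at time $\text{mid}$, and then recurse on $(l,\text{mid}]$ and $(\text{mid},r]$. The subroutine itself is a Dijkstra-style relaxation that sweeps over $O(L)$ geometric distance scales, using Thorup's integer priority queue to pay $O(\log\log n)$ per operation.

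The running-time accounting has two main pieces. Each edge appears in $O(\log n)$ levels of the time recursion, and within each subproblem it is relaxed at most $L$ times (once per distance scale), giving $O(mL\log n \cdot \log\log n)$ base work. The remaining $O(\log^2 n)$ factors come from (i) a segment tree over edge insertion times needed to answer ``which edges fall inside $(l,r]$'' per subproblem without rescanning the entire edge set, and (ii) bookkeeping and binary-search lookups in the step-function outputs to record breakpoints at the correct times. Multiplying out yields $O(mL\log^3 n \log\log n) = O(m\log(nW)\log^3 n \log\log n/\epsilon)$ as claimed.

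The main obstacle is controlling the accumulation of approximation error across the $O(\log n)$ recursion levels: naive $(1+\epsilon)$-rounding at each level would blow up the total factor to $(1+\epsilon)^{\log n}$. I would resolve this by keeping \emph{unrounded} tentative distances within the data structure and snapping only at the moment a breakpoint is actually recorded, and by preventing intra-subroutine compounding via the standard additive-rounding trick: when processing distance scale $r_i = (1+\epsilon)^i$, round edge weights down to multiples of $\epsilon r_i/n$, so any path of true weight at most $r_i$ accrues additive error at most $\epsilon r_i$, i.e.\ a $(1+\epsilon)$-factor multiplicative error near the threshold. A secondary subtlety is that the batched subroutine must not rescan edges already present at time $l$; this is handled by passing the heap state and the current set of finalized distances down from parent to child so that the inner Dijkstra touches only edges inserted inside the current time window.
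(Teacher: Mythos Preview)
Your high-level plan---divide and conquer on the time axis, computing the midpoint's distances and recursing on both halves---matches the paper. But the core subroutine you describe has a real gap, and the paper fills it with an idea you are missing.

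You state that the batched step ``touches only edges inserted inside the current time window,'' taking $\hat d^{\,l}$ as the initial state and processing just $e_{l+1},\ldots,e_{\text{mid}}$. This is incorrect: inserting a single new edge can decrease the distance of a vertex whose \emph{outgoing} edges were all present long before time $l$, and those decreases must propagate through the old edges. Concretely, if $G_l$ already contains a long path $a\to b\to\cdots\to z$ but $d^l(a)=\infty$, and the only new edge is $(s,a)$, then every vertex on the path has its distance drop at time $l+1$, yet none of the relevant edges lies in the current window. So either your subroutine is wrong (if it really ignores old edges) or its work is not bounded by the number of new edges (if it cascades through them), and in the latter case your accounting ``each edge appears in $O(\log n)$ levels'' no longer applies to the edges doing the work.

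The paper's missing ingredient is an \emph{alive/dead} mechanism. For a subproblem $[\ell,r]$ with midpoint $x$, a vertex $v$ is alive only if $\hat d^{\,\ell}(v)\neq\hat d^{\,r}(v)$; dead vertices are not recomputed. The algorithm builds a reduced graph $G'_x$ on the alive vertices plus $s$, and whenever an alive edge $(u,v)$ has a dead tail $u$, it is replaced by an edge $(s,v)$ of weight $\hat d^{\,x}(u)+w(u,v)$ using the already-known estimate for $u$. One then runs ordinary Dijkstra on $G'_x$. This is what lets the running time be charged to \emph{bucket changes}: each vertex is alive in at most $O(\log(nW)\log^2 m/\epsilon)$ subproblems, so $\sum_x m_x$ is small. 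Your proposal has no analogue of this reduction, and ``passing the heap state down'' does not substitute for it.

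Your error-control plan is also at odds with the time bound. Keeping unrounded tentative distances destroys the very property you rely on---that each vertex's estimate takes only $L$ values---so the work bound collapses. The paper instead \emph{does} round at every level, but to powers of $(1+\epsilon/\log m)$ rather than $(1+\epsilon)$; over $\log m$ levels this compounds to at most $1+O(\epsilon)$, while still giving $O(\log(nW)\log m/\epsilon)$ buckets per vertex. Finally, the $\log\log n$ factor in the paper comes not from an integer priority queue but from a binary search over the $O(\log m)$ ancestors of a subproblem to retrieve a dead vertex's stored estimate.
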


\paragraph{Predictions Model and Learned Online Single Source Shortest Paths.} 
Let $\sigma = e_1, \ldots, e_m$ denote the actual online sequence of edge inserts.  Before any edges arrive, the algorithm receives a prediction of this sequence, $\hat{\sigma}$.  This is the same prediction model considered by Henzinger et al.~\cite{HenzingerSSY24} and van den Brand~\cite{BrandFNP24} for the all-pairs shortest-paths problem. Let $\ind(e)$ be the index of $e$ in $\sigma$ and $\widehat{\ind}(e)$ be the index of $e$ in $\hat{\sigma}$.
Define $\widehat{\ind}(e):=m+1$ for edges $e$ that are not in $\hat{\sigma}$. 
Let $\eta_e = |\ind(e) - \widehat{\ind}(e)|$ for each edge $e$ in $\sigma$.

We first describe the performance of our learned SSSP algorithm in terms of parameters $\tau$ and $\text{HIGH}(\tau)$.  For any $\tau$, define $\text{HIGH}(\tau)$ to be the set of edges $e$ in $\sigma$ with error $\eta_e > \tau$. Then, we show that the bounds obtained are more robust than several natural measures of prediction error.  

\begin{theorem}\label{thm:online}
    There is a learned online single-source shortest path algorithm that given a prediction $\hat{\sigma}$ gives the following guarantees:
    \begin{itemize}[noitemsep, nolistsep]
        \item The algorithm maintains a $(1+\epsilon)$-approximate shortest path among edges that have arrived. 
        \item  The total running time for all edge inserts is
        $\tilde{O}(m \cdot \min_{\tau} \{\tau + |\text{HIGH}(\tau)| \} \log W/\epsilon)$.
    \end{itemize}
\end{theorem}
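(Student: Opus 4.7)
The plan is to use Theorem~\ref{thm:offline-approx} on the prediction $\hat{\sigma}$ to warm-start the data structure, then process the real edges by maintaining an index into this offline structure and doing extra work only when reality diverges from the prediction. Throughout I assume a fixed value of $\tau$; the final $\min_\tau$ bound is obtained by running $O(\log m)$ copies at $\tau = 1,2,4,\dots$ in parallel (or by doubling and restarting), absorbing an $O(\log m)$ factor into $\tilde{O}$.

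First, invoke Theorem~\ref{thm:offline-approx} on $\hat{\sigma}$ once, in $\tilde{O}(m \log W / \epsilon)$ time, to obtain a data structure $\mathcal{D}$ that returns $(1+\epsilon)$-approximate $s$-to-$v$ distances at every prefix of $\hat{\sigma}$. Maintain $G_t$ explicitly along with a \emph{cursor} into $\mathcal{D}$. When an arriving edge $e$ matches the predicted edge at the cursor (the \emph{good} case), advance the cursor and answer queries by reading $\mathcal{D}$ at the new cursor; this preserves the invariant that the prefix of $\hat{\sigma}$ up to the cursor equals $G_t$. When $e$ does not match, scan at most the next $\tau$ predicted edges for a match: if one is found, perform an $O(\tau)$ local repair that accounts for the $O(\tau)$ edges in the symmetric difference between $G_t$ and the cursor prefix; otherwise $e$ has $\eta_e > \tau$ and thus belongs to $\text{HIGH}(\tau)$, triggering a full rebuild of $\mathcal{D}$ via Theorem~\ref{thm:offline-approx} on $G_t$ followed by the suffix of still-unmet predictions, at cost $\tilde{O}(m \log W / \epsilon)$ charged to $e$.

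Correctness reduces to the local-repair case: we must show that a $(1+\epsilon)$-approximate SSSP in $G_t$ can be produced from $\mathcal{D}$ together with $O(\tau)$ ``patch'' edges describing the symmetric difference. Since each edge in that difference either arrived early or is a predicted edge not yet arrived, and the difference has size $O(\tau)$ by the scan rule, an $O(\tau)$-size Dijkstra/Bellman--Ford relaxation seeded with $\mathcal{D}$'s labels suffices, or equivalently one can re-invoke Theorem~\ref{thm:offline-approx} on a small auxiliary sequence containing just the patch. Summing the work gives $O(1)$ per good arrival ($\tilde{O}(m)$ total), $O(\tau)$ per local repair ($\tilde{O}(m\tau)$ total, since each arrival triggers at most one repair), and $\tilde{O}(m \log W/\epsilon)$ per HIGH edge ($\tilde{O}(m \, |\text{HIGH}(\tau)| \log W/\epsilon)$ total), yielding the claimed bound $\tilde{O}(m(\tau + |\text{HIGH}(\tau)|) \log W / \epsilon)$. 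The main obstacle I anticipate is proving that the local repair actually delivers a multiplicative $(1+\epsilon)$-approximation in $G_t$ rather than merely a locally valid update, since $\mathcal{D}$'s cursor prefix is neither a subset nor a superset of $G_t$; I expect this is handled by bracketing $G_t$ between two cursor positions and taking the distance computed at the larger one after explicit removal of the $O(\tau)$ absent edges in a bounded relaxation. Once this lemma is in place, combining it with the rebuild accounting above and the parallel-copies trick over $\tau$ completes the proof.
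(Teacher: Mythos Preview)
Your proposal has a genuine gap at exactly the place you yourself flag: the ``$O(\tau)$ local repair.'' The claim that an $O(\tau)$-size relaxation seeded with $\mathcal{D}$'s labels gives a $(1+\epsilon)$-approximation in $G_t$ is not correct as stated, and the suggested fix (bracket $G_t$ between two cursor positions and remove the $O(\tau)$ absent edges via bounded relaxation) does not rescue it. First, adding even a single edge can decrease the distance to $\Omega(n)$ vertices, so you cannot update the array $D$ in $O(\tau)$ time; your per-arrival budget of $O(\tau)$ is simply too small to touch all the entries that may change. Second, the ``remove absent edges'' direction is the decremental SSSP problem, and there is no $O(\tau)$-time procedure that turns approximate distances in a supergraph into approximate distances in a subgraph differing by $O(\tau)$ edges: removing one edge can force recomputation of shortest paths through arbitrarily many vertices. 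So neither side of the symmetric difference can be handled by a bounded relaxation, and the accounting ``$O(\tau)$ per local repair, $\tilde{O}(m\tau)$ total'' collapses.

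The paper's approach is structurally different and avoids this obstacle. It never attempts a constant-per-mismatch local patch. Instead it keeps the full recursion tree of the offline algorithm on an \emph{updated} prediction $\hat{\sigma}_t$, and when $e_t$ arrives at time $t$ but sits at position $t'$ in $\hat{\sigma}_{t-1}$, it shifts $e_t$ to position $t$ and \emph{rebuilds only those subproblems whose midpoints lie in $[t,t')$ together with their descendants}. The cost of a single such rebuild can be large, but an amortized argument (Lemma~\ref{lem:jump}) shows that every position $i$ is ``jumped over'' at most $\tau + 2|\text{HIGH}(\tau)|$ times for every $\tau$ simultaneously, so each subproblem is rebuilt at most $O((\tau + |\text{HIGH}(\tau)|)\log m)$ times; combined with Lemma~\ref{lem:offline-runtime} this gives the stated total. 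Two further differences worth noting: the paper's algorithm achieves the $\min_\tau$ bound with a \emph{single} run (no parallel copies or doubling over $\tau$), and correctness is argued by showing the maintained tree equals the offline tree on $\hat{\sigma}_t$ (Lemma~\ref{lem:pred_approx}), rather than by patching distances directly.
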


Our algorithm uses the offline algorithm as a black box, and therefore our results also apply to the decremental problem in which edges are deleted one by one.

This theorem can be used to give results for two natural error measures. 
We call the first error measure, the \defn{edit distance} $\text{Edit}(\sigma,\hat{\sigma})$, defined as the minimum number of insertions and deletions needed to transform $\sigma$ to the prediction $\hat{\sigma}$. To the best of our knowledge this is a new error measure.
  
The second error measure is  $ \eta = \max_{e \in \sigma} \eta_e $  the maximum error of any edge; this measure was also used by past work (e.g.~\cite{BrandFNP24, McCauleyMNS23, McCauleyMoNi24}).
The theorem gives the following corollary.

\begin{corollary}\label{cor:main}
There is a learned online algorithm that maintains $(1+\epsilon)$-approximate shortest paths and has running time at most the minimum of $\tilde{O}(m \cdot \text{Edit}(\sigma,\hat{\sigma}) \log W/\epsilon)$ and $\tilde{O}(m \eta \log W/\epsilon)$.
\end{corollary}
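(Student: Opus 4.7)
The plan is to derive both claimed running-time bounds from Theorem~\ref{thm:online} by plugging in two specific values of $\tau$ into the expression $\min_{\tau}\{\tau + |\text{HIGH}(\tau)|\}$ and then reporting the smaller of the two resulting bounds.

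For the $\eta$ bound, I would set $\tau = \eta = \max_{e \in \sigma} \eta_e$. By the definition of $\text{HIGH}(\tau)$ as the set of edges with error strictly greater than $\tau$, this choice makes $\text{HIGH}(\eta)$ empty, so $\tau + |\text{HIGH}(\tau)| \le \eta$. Plugging this into Theorem~\ref{thm:online} immediately gives the $\tilde{O}(m \eta \log W/\epsilon)$ bound.

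For the edit-distance bound, let $k = \text{Edit}(\sigma, \hat\sigma)$ and set $\tau = k$. Since the allowed edit operations are insertions and deletions, $k = |\sigma| + |\hat\sigma| - 2|L|$ for some longest common subsequence $L$ of $\sigma$ and $\hat\sigma$. I would fix such an $L$ and establish two facts. First, for every edge $e \in L$ at positions $i = \ind(e)$ in $\sigma$ and $j = \widehat{\ind}(e)$ in $\hat\sigma$, the difference $i - j$ equals the net difference between the number of non-$L$ elements preceding $e$ in $\sigma$ and in $\hat\sigma$; this quantity has absolute value at most $\max(|\sigma \setminus L|,|\hat\sigma \setminus L|) \le k$, so $\eta_e \le k$. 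Second, edges in $\sigma \setminus L$ (including edges in $\sigma \setminus \hat\sigma$, whose nominal error can be as large as $m$) number at most $k$. Hence $\text{HIGH}(k) \subseteq \sigma \setminus L$ and $|\text{HIGH}(k)| \le k$, yielding $\tau + |\text{HIGH}(\tau)| \le 2k$. Plugging this back into Theorem~\ref{thm:online} gives the $\tilde{O}(m \cdot \text{Edit}(\sigma,\hat\sigma) \log W/\epsilon)$ bound, and combining the two via the minimum completes the proof.

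The main subtlety I anticipate is the positional-shift argument in the edit-distance case: one has to be careful about edges that appear in $\sigma$ but not in $\hat\sigma$, whose formally defined error $|\ind(e) - (m+1)|$ is in general not bounded by $k$. Those edges are absorbed into the $|\text{HIGH}(\tau)|$ term rather than the $\tau$ term, using the fact that each such edge contributes $1$ to the edit distance. The $\eta$ case is immediate from definitions.
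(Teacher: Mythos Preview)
Your proposal is correct and follows essentially the same approach as the paper: the paper also sets $\tau = \eta$ for the first bound and $\tau = \text{Edit}(\sigma,\hat\sigma)$ for the second, arguing that any edge not inserted or deleted in the edit transformation moves by at most $\tau$ positions, so only the inserted or deleted edges can land in $\text{HIGH}(\tau)$. Your LCS formulation makes this positional-shift claim more explicit than the paper's one-sentence sketch, and your handling of edges in $\sigma\setminus\hat\sigma$ is the right way to close that gap, but the underlying idea is identical.
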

The corollary can be seen as follows. 
By definition, there are no edges in $\text{HIGH}(\eta)$.  Thus, setting $\tau = \eta$ in Theorem~\ref{thm:online}  gives an $\tilde{O}(m \eta \log W/\epsilon)$ running time.  Alternatively, setting $\tau = \text{Edit}(\sigma,\hat{\sigma})$  ensures that $\text{HIGH}(\tau)$
contains at most $\tau$ edges. This is because any edge not inserted or deleted in the process of transforming $\hat{\sigma}$ to $\sigma$ can move at most $\tau$ positions and so only inserted or deleted edges contribute to $\text{HIGH}(\tau)$.
This gives a running time of $\tilde{O}(m\ \text{Edit}(\sigma,\hat{\sigma}) \log W/\epsilon)$.

\paragraph{Discussion On Single-Source Shortest Paths.}   Notice that if a small number of edges have a large error $\eta_e$, the $\text{Edit}(\sigma,\hat{\sigma})$ is small even though the maximum error is large, and thus the algorithm retains strong running time guarantees on such inputs.   
Furthermore, $\text{Edit}(\sigma,\hat{\sigma})$ is small even if there is a small number of edges that are not predicted to arrive but do, or are predicted to arrive but never do (such edges are essentially insertions and deletions).  

On the other hand, the edit distance bound is a loose upper bound in some cases: for example, even if a large number of edges are incorrect, but have small relative change in position between $\sigma$ and $\hat{\sigma}$, then $\eta$ will be small. 

The algorithm is \emph{consistent} as its running time is optimal (since $\Omega(m)$ is required to read the input) up to log factors when the predictions are perfect.
It is \emph{smooth} in that it has a slow linear degradation of running time in terms of the prediction error.  We remark that \emph{robustness} to arbitrarily erroneous predictions can be achieved with respect to any worst-case algorithm simply by switching to the worst-case algorithm in the event that the learned algorithm's run time grows larger than the worst-case guarantee. 

\paragraph{Extension to All-Pairs  Shortest Paths.}  Next, we show that our techniques are generalizable by applying them to the all-pairs shortest-path (APSP) problem.  Similar to the SSSP case, we first solve the offline incremental version of the problem by running the SSSP algorithm multiple times.  We then extend it to the online setting with predictions. 

For the incremental APSP problem, it does not make sense to consider amortized cost---Bernstein~\cite{bernstein2016maintaining} gives an algorithm with nearly-optimal total work for the online problem even without predictions.  As a result, past work on approximate all-pairs shortest path with predictions has focused on improving the worst-case time for each update and query.  

For the APSP problem, we follow~\cite{HenzingerSSY24,BrandFNP24} and assume that $\hat{\sigma}$ is a permutation of $\sigma$---the set of edges predicted to arrive are exactly the set that truly arrive.\footnote{While this is in some cases a strong assumption, it seems unavoidable for worst-case update and query cost.} 

\begin{theorem}\label{thm:online-apsp}
    There is a learned online all-pairs shortest path algorithm with $\tilde{O}(nm\log W / \epsilon)$ preprocessing time, $O(\log n)$ worst-case update time, and $O(\eta^2\log\log_{1 + \epsilon} (nW))$ worst-case query time.
\end{theorem}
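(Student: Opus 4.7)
The plan is to use Theorem~\ref{thm:offline-approx} as a preprocessing black box on the predicted sequence $\hat{\sigma}$, and to reduce each online query to a tiny shortest-path computation on an auxiliary graph whose size is controlled by $\eta$.

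In preprocessing, I would run the offline SSSP algorithm from each of the $n$ sources on $\hat{\sigma}$, giving total time $\tilde{O}(nm\log W/\epsilon)$. This yields access to the predicted approximate distance $\hat{d}^{t}(x,y)$ for all triples $(x,y,t)$. Because $\hat{d}^{t}(x,y)$ is nonincreasing in $t$, lies in $[1,nW]$, and changes only by multiplicative factors of at least $1+\epsilon$, each pair experiences at most $O(\log_{1+\epsilon}(nW))$ updates over the entire timeline. For each $(x,y)$ I would store the sorted sequence of these change points, enabling $O(\log\log_{1+\epsilon}(nW))$ binary-search lookups of $\hat{d}^{t'}(x,y)$. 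I would also maintain an array indexed by predicted position whose slot $\widehat{\ind}(e)$ points to $e$ and carries an arrival bit.

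For updates, when edge $e$ truly arrives at time $t$, I record $\ind(e)=t$ and flip the arrival bit; this comfortably fits in $O(\log n)$. For a query $(u,v,t)$ the central structural observation is that, since $|\ind(e)-\widehat{\ind}(e)|\le \eta$ for every $e$, we have $\hat{G}_{t-\eta}\subseteq G_t$: every edge predicted to arrive by time $t-\eta$ has in fact arrived by time $t$. Consequently the set $E^{*}$ of \emph{late} edges $G_t\setminus\hat{G}_{t-\eta}$ consists entirely of edges whose predicted index lies in $(t-\eta,t+\eta]$, so $|E^{*}|\le 2\eta$ and $E^{*}$ is enumerable in $O(\eta)$ time by scanning the predicted-position array. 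I would then build an auxiliary graph $H$ on vertex set $\{u,v\}\cup V^{*}$ (where $V^{*}$ is the set of endpoints of $E^{*}$), placing (i) each late edge with its true weight and (ii) for every pair $x,y\in V(H)$ a virtual edge of weight $\hat{d}^{t-\eta}(x,y)$, fetched by binary search. Dijkstra on this dense graph with $O(\eta)$ vertices and $O(\eta^{2})$ edges runs in the claimed $O(\eta^{2}\log\log_{1+\epsilon}(nW))$.

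For correctness I would show that any optimal $u$-$v$ path in $G_t$ decomposes into late-edge hops interleaved with subpaths lying entirely in $\hat{G}_{t-\eta}$, since $G_t\setminus\hat{G}_{t-\eta}=E^{*}$. Each such subpath of true length $\ell$ is overestimated by at most $(1+\epsilon)\ell$ via the corresponding virtual edge, while late edges contribute their exact weights, so the shortest $H$-path is at most $(1+\epsilon)d^{t}(u,v)$; conversely every $H$-path corresponds to a valid walk in $G_t$ via $\hat{G}_{t-\eta}\subseteq G_t$, giving the matching lower bound $d^{t}(u,v)$. The main obstacle I anticipate is not this approximation argument but staying inside the $\tilde{O}(nm\log W/\epsilon)$ preprocessing budget: a naive sweep to extract change points per pair would cost $\Theta(n^{2}m)$, so the construction must exploit that the offline algorithm of Theorem~\ref{thm:offline-approx} already surfaces the $\tilde{O}(\log_{1+\epsilon}(nW))$ distance-update events per pair as part of its internal scale-based computation, avoiding any post-hoc sweep.
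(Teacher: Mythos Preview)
Your high-level approach is the paper's: preprocess by running the offline incremental SSSP algorithm from every source on $\hat{\sigma}$ (this is exactly Corollary~\ref{cor:offline-approx-apsp}, and the $L_v$ lists it already produces are precisely the per-pair change points you need, so your worry about the preprocessing budget is unfounded), and on a query build a small complete auxiliary graph on the endpoints of the ``late'' edges together with $u,v$, weight the virtual edges by looked-up predicted distances at an earlier time, and run Dijkstra. Your correctness argument via path decomposition is also the same as the paper's.

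There is, however, one genuine gap: your algorithm uses $t-\eta$ as the cutoff and enumerates $E^{*}$ by scanning predicted positions in $(t-\eta,t+\eta]$, but the algorithm does not know $\eta$; it is an error parameter of the whole instance, not an input. The paper avoids this by computing, at query time, the quantity $t'=\max\{s:\hat{G}_s\subseteq G_t\}$, i.e., the largest $s$ such that every edge with predicted index at most $s$ has already arrived. This $t'$ satisfies $t'\ge t-\eta$ (so it plays exactly the structural role you want) and, crucially, is computable from the arrival data alone. To support this, the paper maintains during updates a balanced BST keyed on the predicted indices of the arrived edges; this is where the $O(\log n)$ update time is actually spent---your bit-flip in the predicted-position array is not enough. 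With the BST, at query time one recovers $t'$ and then enumerates $E'=G_t\setminus \hat{G}_{t'}$, which has size exactly $t-t'\le\eta$, in $O(\eta\log n)$ time. Once you replace $t-\eta$ by this dynamically computed $t'$, your argument goes through and coincides with the paper's.
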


\paragraph{Comparison to Prior Work.} 

To the best of our knowledge, no prior work has considered the incremental SSSP problem in the algorithms-with-predictions framework. Our algorithm for the SSSP problem has a recursive tree of subproblems, that we bring online with predictions on the input sequence.  We remark that the work of   Liu and Srinivas \cite{liu2023predicted} gave a general framework for taking offline recursive tree algorithms into the online setting with predictions. Their framework is general and applies to a large class of problems that can be decomposed into recursive subproblems with smaller cost.
In contrast, our tree-based decomposition technique is tailored specifically to shortest paths which enables our efficient runtime.  Moreover, our analysis differs significantly from~\cite{liu2023predicted} as well---we cannot spread the cost evenly over smaller-cost recursive subproblems.  This is because a single edge insert can result in $\Omega(n)$ changes in shortest paths.  To avoid this, we allow subproblems to grow and shrink as necessary and charge the cost of a large subproblem to a large number of distance changes; see Section~\ref{sec:technical}.

The incremental APSP problem with predictions was studied by past work~\cite{HenzingerSSY24,BrandFNP24}.  Henzinger et al.~\cite{HenzingerSSY24} achieve $\tilde{O}(1)$ update time and $\tilde{O}(\eta^2)$ query time, after $O(mn^3)$ preprocessing for any weighted graph.
van den Brand et al.~\cite{BrandFNP24} achieve similar update and query times with $\tilde{O}(n^{(3 + \omega)/2})$ preprocessing time; however, the result is limited to unweighted graphs.  
They further show that this query time is essentially optimal: under the Online Matrix-Vector Multiplication Hypothesis, it is not possible to obtain $O(\eta^{2-\delta})$ query time for any $\delta > 0$ while maintaining polynomial preprocessing time and $O(n)$ update time.
Thus, Theorem~\ref{thm:online-apsp} obtains faster preprocessing time for sparse graphs and supports weighted graphs.  Finally, we note that Liu and Srinivas~\cite{liu2023predicted} give bounds for the \emph{fully dynamic} weighted APSP problem in the prediction-deletion model they propose and their bounds are incomparable to ours.
\subsection{Additional Related Work}\label{sec:related}

\paragraph{Data Structures with Predictions.}

Data structures augmented with predictions have demonstrated empirical success in key applications such as indexing \cite{KraskaBCDP18,ding2020alex,DaiXGA20}, caching \cite{JiangP020,LykourisVassilvitskii21}, 
Bloom filters \cite{mitzenmacher2018model,vaidya2020partitioned}, frequency estimation \cite{hsu2019learning}, page migration \cite{IndykMMR22}, routing \cite{BhaskaraGKM20}.  

Initial theoretical work on data structures with predictions focused on improving their space complexity or competitive ratio, e.g. learned filters~\cite{vaidya2020partitioned,mitzenmacher2018model,bercea2022daisy} and count-min sketch~\cite{hsu2019learning,du2021putting} on stochastic learned distributions.  Several papers have since used predictions to provably improve the running time of dictionary data structures, e.g.\ learned binary-search trees~\cite{lin2022learning, chen2022power, zeynalirobust,dinitz2024binary}.  McCauley et al.~\cite{McCauleyMNS23} presented the first ``ideal'' data structures with prediction for the list-labeling problem~\cite{McCauleyMNS23}. They use a ``prediction-decomposition'' framework to obtain their bounds and extend the technique to maintain topological ordering and perform cycle detection in incremental graphs with predictions~\cite{McCauleyMoNi24}.

\paragraph{Incremental SSSP and APSP in the Worst-Case Setting.} 
We review the state-of-the-art deterministic worst-case algorithms for the incremental $(1+\epsilon)$ SSSP problem.  

The best-known deterministic algorithm for dense graphs is by Gutenberg et al.~\cite{probst2020new} with total update time $\tilde{O}(n^2 \log W/\epsilon^{O(1)})$, which is nearly optimal for very dense graphs.  For sparse graphs,
Chechik and Zhang~\cite{chechik2021incremental} give an algorithm with total time $\tilde{O}(m^{5/3}\log W/\epsilon)$, which is improved by 
Kyng et al.~\cite{KyngMeGu22} to a total time $\tilde{O}(m^{3/2}\log W /\epsilon)$; this is the best-known deterministic algorithm for sparse graphs.  Note that many of these solutions use the ES-tree data structure~\cite{shiloach1981line} as a key building block. The ES-tree can maintain exact distances in an SSSP tree for weighted graphs with total running time $O(mnW)$~\cite{henzinger1995fully}, where $W$ is the ratio of the largest to smallest edge weight. The ES-tree can be used to maintain $(1+\epsilon)$-approximate distances in total update time $\tilde{O}(mn\log W/\epsilon)$ for incremental/decremental directed SSSP; see e.g.~\cite{bernstein2009fully,bernstein2016maintaining,madry2010faster}.

This paper shows that even modestly accurate predictions can be leveraged to circumvent these high (polynomial) worst-case update costs in incremental SSSP.

For the approximate incremental APSP problem without predictions, Bernstein~\cite{bernstein2016maintaining} gave an algorithm that has nearly-optimal total runtime $O(nm \log^4 (n) \log (nW)/\epsilon)$ and $O(1)$ look-up time. 

Peng and Rubinstein consider generally how incremental or decremental algorithms can be turned into fully dynamic algorithms \cite{PengR23}.
\section{Preliminaries}
\label{sec:prelim}
Let $\sigma = e_1, \ldots, e_m$ be the sequence of all edge insertions, and let $V$ be the set of vertices. 
We assume $m$ is of the form $2^k$ throughout this paper, which can be assumed without loss of generality by allowing dummy edge inserts.
Each edge $e_i$ has a weight $w(e_i) \in [1, W]$.
Let $G_t$ be the graph consisting of vertices $V$ and the first $t$ edges $e_1, \ldots, e_t$; we call this the \emph{graph at time $t$}. 
We define $G_0$ to be the empty graph on the vertex set $V$.

The \emph{length} of a path is the sum of the weights of its edges. 
Let $d^t(v)$ be the length of the shortest path from $s$ to $v$ in $G_t$. 
Throughout this paper, all logarithms are base 2 unless otherwise specified.

\paragraph{Problem Definition.}
In the offline problem, the algorithm is given a sequence of edge inserts $\sigma = e_1, e_2, \ldots, e_m$, the set of vertices $V$, and the source vertex $s$.  The goal of the algorithm is to output a data structure that answers queries $(v,t)$ of the form: what is the length of the shortest path from $s$ to $v$ at time $t$?  This answer should be a $(1 + \epsilon)$-approximation: specifically, for a query $(v, t)$, if the data structure returns $d$, then ${d}^t(v) \leq d \leq d^t(v)(1 + \epsilon)$.

In the online problem, the edges in $\sigma = e_1, \ldots, e_m$ arrive one at a time.  Before any edge arrives, the algorithm is given a prediction $\hat{\sigma}$ of $\sigma$, as well as the set of vertices $V=\{v_1,\ldots,v_n\}$ and the source vertex $s$.  
We assume the length of $\hat{\sigma}$ is the same as $\sigma$.
At all times, the algorithm must maintain an array $D$ containing the length of the shortest path to each vertex.  In particular, after edge $t$ is inserted, $D$ must satisfy $d^t(v_i)\leq D[i] \leq d^t(v_i)(1 + \epsilon)$ for all vertices $v_i$.

\paragraph{Adjusting $\epsilon$.}
Throughout this paper, we assume $\epsilon = O(1)$.
In our algorithms, distance estimates $\hat{d}^t(v)$ are used for each node $v$ and time $t$ that satisfy the following invariant (see Lemmas~\ref{lem:refined_approx} and~\ref{lem:pred_approx}):
$d^t(v) \leq \hat{d}^t(v) \leq d^t(v)(1 + \epsilon/ \log m)^{\log m}$.

We will need to slightly adjust $\epsilon$ in our algorithms to account for lower-order terms.
For $\epsilon<1.79$, it is known that
\[\left( 1 + \frac{\epsilon}{\log m}\right)^{\log m} \leq e^\epsilon \leq 1+\epsilon+\epsilon^2,\]
which ensures that $d^t(v) \leq \hat{d}^t(v) \leq d^t(v)(1 + \epsilon + \epsilon^2)$.
To ensure our algorithms return $(1+\epsilon)$ approximations for $d^t(v)$, the $\epsilon$ parameters used are set to be $(\min\{1.79,\epsilon\})/4$. 

\paragraph{Defining Recursive Subproblems.}
The offline algorithm is recursive.
It divides the interval $[0,m]$ in half and recurses on both sides. On each recursive call, the algorithm will process an interval $[\ell, r]$ and further subdivide the interval until it contains a single edge. 
We call each recursive call $[\ell,r]$ a \emph{subproblem}.
We refer to this subproblem both as \emph{subproblem $[\ell,r]$} and \emph{subproblem $x$}, where $x=(\ell+r)/2$ is the midpoint of $[\ell,r]$.
Each subproblem corresponds to a node in the recursion tree.
We might refer to the subproblem $[\ell,r]$ in the recursion tree as \emph{node $x$} in the recursion tree.
Each time $x \in \{1,\ldots,m-1\}$ is the midpoint of exactly one node in the tree.
Let the \defn{level} of a time $x$ be the depth of node $x$ in the recursion tree. 
So the level of $m/2$ is $1$, the level of $m/4$ and $3m/4$ is $2$, and so on.
In particular, the level of $x$ is one more than the maximum level of $\ell$ and $r$.
For ease of notation, we set the level of 0 and $m$ to be 0. 
See Figure~\ref{fig:tree} for an illustration of the recursion tree.
We refer to the ancestors and descendants of a node in the recursion tree as \emph{ancestors} and \emph{descendants} of the corresponding subproblem.
All these definitions extend to the online algorithm as well, as it is built up on the offline algorithm.

\begin{figure}
    \centering
    \begin{tikzpicture}[main/.style = {fill = black, circle, inner sep = 2pt}, 
    label/.style = {fill = none, circle, inner sep = 3pt}, 
    scale = 0.9]
        \draw[step = 1cm, gray, thin] (0,1) grid (16,5);
            \filldraw[fill = white, draw = black] (0,5) rectangle (16,4);
        \foreach \i in {0, 8}
            \filldraw[fill = white, draw = black] (\i,4) rectangle (\i+8,3); 
        \foreach \i in {0, 4, ..., 12}
            \filldraw[fill = white, draw = black] (\i,3) rectangle (\i+4,2);    
        \foreach \i in {0, 2, ..., 14}
            \filldraw[fill = white, draw = black] (\i,2) rectangle (\i+2,1);

        \foreach \i in {0, 1, 2, ..., 16}
            \node[label] at (\i, 0.5) {$G_{\i}$};
        
        \node[main] at (8, 4.5) {};
        \foreach \i in {4, 12}
            \node[main] at (\i, 3.5) {};
        \foreach \i in {2, 6, 10, 14}
            \node[main] at (\i, 2.5) {};
        \foreach \i in {1, 3, ..., 16}
            \node[main] at (\i, 1.5) {};
    
        \foreach \i in {4, 12}
           \draw (8, 4.5) -- (\i, 3.5);
        \foreach \i in {2, 6}
           \draw (4, 3.5) -- (\i, 2.5);
        \foreach \i in {10, 14}
           \draw (12, 3.5) -- (\i, 2.5);
        \foreach \i in {1, 3}
           \draw (2, 2.5) -- (\i, 1.5);
        \foreach \i in {5, 7}
           \draw (6, 2.5) -- (\i, 1.5);
        \foreach \i in {9, 11}
           \draw (10, 2.5) -- (\i, 1.5);
        \foreach \i in {13, 15}
           \draw (14, 2.5) -- (\i, 1.5);  
    \end{tikzpicture}    
    \vspace{-.15in}
    \caption{The recursion tree of the algorithm. Each node $x$ in the tree is associated with an interval $[\ell,r]$ such that $x=(\ell+r)/2$. The depth of a node is the number of nodes in the path from the root $m/2$ to that node. For example, the depth of node $x=6$ is 3, and its corresponding interval is $[4,8]$.}
    \label{fig:tree}
    \vspace{-.15in}
\end{figure}
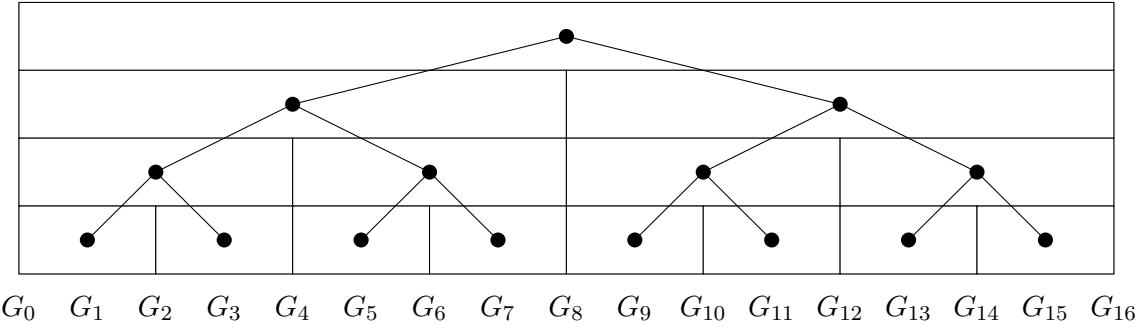

\section{Technical Overview}
\label{sec:technical}

\subsection{Offline Algorithm}
The preliminary observation behind our algorithm is the following. Let's say we round the distance to each vertex up to the nearest power of $(1 + \epsilon)$.  Since the maximum distance to any vertex is $nW$, this means that we group distances into $O(\log_{1 + \epsilon} (nW)) = O(\log (nW)/\epsilon)$ ``buckets.''  Knowing that the distance to any vertex only decreases over time, the buckets of all vertices only change $O(n\log (nW)/\epsilon)$ times in total.  The goal of our algorithm is to list the $O(\log (nW)/\epsilon)$ times when a vertex shifts from one bucket to the next.

Thus, when an edge is inserted at time $t$, our goal is to run Dijkstra's only on vertices whose distance is changing at time $t$, charging the cost to the changing distances.  

\paragraph{Removing Vertices from Recursive Calls.} 
Our algorithm recursively divides the sequence of edge inserts in half, starting with the entire sequence of edge inserts $\{e_1, \ldots, e_m\}$, and recursing until the sequence contains a single edge.   

Consider a sequence $\{e_{\ell}, \ldots, e_r\}$, which we denote by $[\ell, r]$.  We divide this interval into two equal halves: $[\ell, x], [x, r]$.  

The observation behind our algorithm is the following.  Let's say we calculate the distance to all vertices at time $x$.  Consider a vertex $v$ that is in the same bucket at $\ell$ and $x$.
Since the distance to $v$ is non-increasing as new edges arrive, $v$ must be in this bucket throughout the interval, and we do not need to continue calculating its distance. 
Therefore, we should remove $v$ from the recursive call for the interval $[\ell, x]$.  
Similarly, if $v$ is in the same bucket at $x$ and $r$, then we should remove $v$ from $[x, r]$.

If we are successful in removing vertices this way, we immediately improve the running time.  Each vertex $v$ is included in an interval $[\ell, r]$ if and only if it shifts from one bucket to another between $\ell$ and $r$.  Thus, each time $v$ shifts buckets, it is included in at most $\log m$ intervals.  This means that each vertex is included in $O(\log m \log (nW)/\epsilon)$ intervals.  
 
Let's assume that for an interval $[\ell, r]$, we can run Dijkstra's algorithm for the midpoint $x$ in time (ignoring log factors) proportional to the number of edges that have at least one endpoint included in the interval $[\ell, r]$. 
Since each vertex contributes to the running time of $\polylog(nmW)/\epsilon$ different subproblems, we can sum to obtain $\tilde{O}(m\log W/\epsilon)$ total cost to calculate the shortest path to every vertex in every subproblem.

\paragraph{Ensuring Dijkstra's Runs are Efficient.}
Let us discuss in more detail how to ``remove'' a vertex from a recursive call.
We do not remove vertices from the graph; instead, we say that a vertex is \defn{dead} during an interval $[\ell, r]$ if its distance bucket does not change during the interval, and \defn{alive} otherwise.
Each edge $(u,v)$ is alive if and only if $v$ is alive.
  
For each time $x$  which is the midpoint of an interval $[\ell, r]$, consider building a graph $G'_x$
consisting of all alive vertices and edges in $[\ell, r]$.  If $(u,v)$ is alive, but $u$ is dead, we also add $u$ to $G'_x$.
Now, the time required to run Dijkstra's algorithm on $G'_x$ is roughly proportional to the number of edges in $G'_x$ as we desired, but the distances it gives are not meaningful---in fact, $s$ may not even be in $G'_x$.

To solve this problem, we observe that we already know (approximately) the distance to any dead vertex $u$: we marked $u$ as dead because its distance bucket does not change from $\ell$ to $r$.  Thus, we add $s$ to $G'_x$; then, for each alive edge $(u,v)$ where $u$ is dead, instead of adding $u$ to $G'_x$, we add an edge from $s$ to $v$ with weight equal to the rounded distance from $s$ to $u$ from the previous recursive call plus $w(u,v)$.
Running Dijkstra's algorithm on $G'_x$ now gives (ostensibly) meaningful distances from $s$.

We calculate the distance to each vertex in $G'_x$, round the distance up to obtain its bucket, and then recurse (marking vertices dead as appropriate) on $[\ell, x]$ and $[x, r]$.

\paragraph{Handling Error Accumulations.}

Unfortunately, the above method does not quite work, as the error accumulates during each recursive call.  

For an interval $[\ell, r]$ with midpoint $x$, we use the terms \emph{recursive call $[\ell, r]$} and \emph{recursive call $x$} interchangeably.
For a recursive call $x$, the shortest path to $v$ may begin with an edge $(s, v')$ (for some dead vertex $v'$) weighted according to the bucket of $v'$.  The bucket of $v'$ was calculated by rounding up the length of the shortest path to $v'$ in some graph $G'_{y}$ at time $y$; the length of this path again was also rounded up, and so on.

In other words, since the weight of each edge from $s$ to a dead vertex $v'$ is based on the bucket of $v'$, it has been rounded up to the nearest power of $1 + \epsilon$.  Thus, \emph{each} recursive call loses a $1 + \epsilon$ approximation factor in distance.

To solve this, we note that the recursion depth of our algorithm is $\log_2 m$.  Thus, we split into finer buckets: rather than rounding up to the nearest power of $1 + \epsilon$, we round up to the nearest power of $1 + \epsilon/\log_2 m$.  Thus, the total error accumulated over all recursive calls is $(1 + \epsilon/\log_2 m)^{\log_2 m} \leq 1 + \epsilon + \epsilon^2$ (see Section~\ref{sec:prelim}); decreasing $\epsilon$ slightly gives error $1 + \epsilon$.  

Rounding buckets up to the nearest power of $1 + \epsilon/\log m$ changes our running time analysis: most critically, each vertex now changes buckets $\log_{1 + \epsilon/\log m} nW = \Theta(\log(nW) \log m/\epsilon)$ times.  

\subsection{Ideal Algorithm with Predictions}

With perfect predictions, the algorithm described above runs in $\tilde{O}(m\log W/\epsilon)$ time: we can just run the algorithm on the predicted (in fact the actual) sequence $\hat{\sigma} = \sigma$ to obtain an approximate distance to every vertex at each point in time.  
We show how to carefully rebuild portions of the offline approach to obtain an ideal algorithm.

\paragraph{Warmup: Hamming Distance of Error.}  To start, let's give a simple algorithm that uses predictions that contain error.  
Let $Ham(\sigma, \hat{\sigma})$ be the Hamming distance between $\sigma$ and $\hat{\sigma}$---in other words, the number of edges whose predicted arrival time was not exactly correct in $\hat{\sigma}$.  We give an algorithm that runs in $\tilde{O}(m\cdot Ham(\sigma, \hat{\sigma})\log W/\epsilon)$ time.

The main idea behind this algorithm is to \emph{update} the sequence of predictions $\hat{\sigma}$ as edges come in.  At time $t$, an edge $e$ arrives; if $e$ arrived at $t$ in $\hat{\sigma}$, the algorithm does nothing---the predictions created by running the offline algorithm on $\hat{\sigma}$ took $e$ arriving at $t$ into account.  If $e$ did not arrive at $t$ in $\hat{\sigma}$, then the algorithm creates a new $\hat{\sigma}$, replacing the edge arriving at $t$ with $e$.  The algorithm then reruns the offline algorithm on the new $\hat{\sigma}$ in $\tilde{O}(m\log W/\epsilon)$ time.

Since we run the offline algorithm (in $\tilde{O}(m\log W/\epsilon)$ time) each time an edge was predicted incorrectly, we immediately obtain $\tilde{O}(m\cdot Ham(\sigma, \hat{\sigma})\log W/\epsilon)$ time.

\paragraph{Handling Nearby Edges More Effectively.}
Ideally, we would not rebuild from scratch each time an edge is predicted incorrectly---we would like the running time to be proportional to \emph{how far} an edge's true arrival time is from its predicted arrival time.  

Our final algorithm improves over the warmup Hamming distance algorithm in two ways.  First, it updates the predicted sequence $\hat{\sigma}$ more carefully.  Second, it only rebuilds parts of the recursive calls of the offline algorithm: specifically, only intervals that changed as $\hat{\sigma}$ was updated.

First, let's describe how to update $\hat{\sigma}$.  As before, when an edge $e$ arrives at time $t$, if $e$ was predicted to arrive at $t$ in $\hat{\sigma}$ the algorithm does nothing.  
If $e$ was predicted to arrive at time $t'$ in $\hat{\sigma}$, the algorithm modifies $\hat{\sigma}$ by inserting $e$ at time $t$ (shifting all edges after $t$ down one slot), and deleting $e$ at time $t'$ (shifting all edges after $t'$ up one slot).
If $e$ is not predicted in $\hat{\sigma}$, the algorithm only inserts $e$ at time $t$, shifting subsequent edges down one slot.

The only entries in $\hat{\sigma}$ that change are those between $t$ and $t'$ (inclusive).  Therefore, we only need to recalculate $G'$ for times between $t$ and $t'$.  

Finally, we update the distance array $D$.  The algorithm greedily updates $D$ during the above rebuilds to maintain the invariant that $D[i]$ stores the estimated distance $\hat{d}^t(v_i)$, which as discussed in Section~\ref{sec:prelim}, is a $(1+\epsilon)$ approximation for $d^t(v_i)$.

\paragraph{Analysis.}
For any edge $e$ that appears at time $\ind(e)$ in $\sigma$ and time $\widehat{\ind}(e)$ in $\hat{\sigma}$, we define $\eta_e = |\ind(e) - \widehat{\ind}(e)|$.  If $e$ does not appear in $\hat{\sigma}$ then $\eta_e = |m+1-\ind(e)|$.
Let $\eta$ be the maximum error: $\eta = \max_{e \in \sigma} \eta_e$.
 
We show that an edge $e$ causes a rebuild of a graph $G'_t$ only if $t$ is between its predicted arrival time $\widehat{\ind}(e)$ and its actual arrival time $\ind(e)$. 
This means that each $G'_t$ can only be rebuilt $O(\eta)$ times.  Since the total time to build all $G'_t$ is $\tilde{O}(m \log W/\epsilon)$, the total time to rebuild all $G'_t$ $\eta$ times is $\tilde{O}(m\eta\log W/\epsilon)$.

With a more careful analysis, we can get the best of the Hamming analysis and the max error analysis.  For any $\tau$, let $\text{HIGH}(\tau)$ be the set of edges with error more than $\tau$.  For all edges with error more than $\tau$, we may (in the worst case) rebuild the entire interval $\{1, \ldots, m\}$, for total cost $\tilde{O}(m |\text{HIGH}(\tau)| \log W/\epsilon)$.  For all edges with error at most $\tau$, the total rebuild cost is, as above, $\tilde{O}(m\tau\log W/\epsilon)$.  Thus, we obtain total cost $ \tilde{O}\left(m (\log W/\epsilon) \cdot \min_{\tau} \{ \tau+|\text{HIGH}(\tau)| \}\right)$.
\section{Offline Incremental Weighted Directed Shortest Path}
\label{sec:offline}
This section presents an algorithm for the offline problem which takes ${O(m\log (nW) (\log^3 n)\log\log n/\epsilon)}$ time to build, and $O(\log \log_{1+\epsilon}(nW))$ time to answer a query $(v,t)$.  

The algorithm maintains an estimate of the shortest path at all points in time.  
Let $\hat{d}^t(v)$ be the estimate of $d^t(v)$ obtained by the algorithm. 
The algorithm does not explicitly maintain $\hat{d}^t(v)$ values for each vertex $v$ and each time $t$, because of the following simple observation: 
if for times $\ell$ and $r$, $\ell<r$, we have $\hat{d}^\ell(v)=\hat{d}^r(v)$, it means that from the algorithm's perspective, node $v$ has the same distance from $s$ in graphs $G_{\ell}$ and $G_r$. 
Since the distances of the nodes from $s$
are non-increasing over time, the algorithm infers that $\hat{d}^t(v)=\hat{d}^\ell(v)$ for each $\ell \leq t \leq r$.
Although in such case $\hat{d}^t(v)$ is not explicitly stored by the algorithm, we still use this notation to refer to $\hat{d}^{\ell}(v)=\hat{d}^r(v)$.

In each subproblem $x$ with an interval $[\ell,r]$, where $x=(\ell+r)/2$, vertices and edges are marked by the algorithm as alive or dead.
A vertex is \emph{alive} in subproblem $x$ if its estimated distances are not the same in subproblems $\ell$ and $r$, otherwise it is \emph{dead}.
In each subproblem $x$ (i.e., node $x$ in the recursion tree), the distance estimates $\hat{d}^x(v)$ for all alive nodes $v$ are explicitly maintained in a balanced binary search tree.
This allows us to access $\hat{d}^x(v)$ in $O(\log n)$ time for each alive node $v$ in subproblem $x$.

Moreover, for each node $v$, the algorithm maintains a list $L_v$ of length
$\log_{1 + \epsilon}( nW)$, representing the times when $v$'s estimated distance from $s$ moves from one integer power of $(1+\epsilon)$ to another.
In particular, the $i$th entry for a vertex $v$, denoted by $L_v(i)$, represents the minimum $t$ such that $\hat{d}^t(v) \leq (1 + \epsilon)^i$.
This is the data structure that the algorithm outputs in order to answer queries $(v,t)$.
To answer a query $(v,t)$, i.e., to obtain a $(1+\epsilon)$-approximation of $d^t(v)$, the algorithm performs  a binary search on $L(v)$ to find an $i$ such that $L_v(i) \leq t < L_v(i-1)$, and then we return $d=(1+\epsilon)^i$.
The time needed to obtain $d$ is then $\log (\log_{1+\epsilon}(nW))$. 
Note that $\hat{d}^t(v) \leq d \leq (1+\epsilon)\hat{d}^t(v)$, which as discussed in Section~\ref{sec:prelim}, it means that $d$ is within a $(1+\epsilon)$ factor of $d^t(v)$ for the original $\epsilon$.

Now we are ready to define the algorithm.

\subsection{The Offline Algorithm}
\label{sec:offline_algorithm}

The offline algorithm is recursive.
It starts with the interval $[0,m]$, and it recursively divides the interval in half and continues on the two subintervals.
Initially, the algorithm marks all vertices and edges as alive in $G_m$, and all vertices as alive in $G_0$. The algorithm runs  Dijkstra's  on $G_m$, and for each vertex $v$, it sets $\hat{d}^m(v) = d^m(v)$. The algorithm sets $\hat{d}^0(s) = 0$, and for all $v \in V \setminus \{s\}$, it sets $\hat{d}^0(v) = \infty$.  The algorithm then begins the recursive process on the interval $[0, m]$.
The algorithm stops recursing when the interval $[\ell,r]$ contains only one new edge; that is, when $r-\ell=2$.

Consider when the algorithm is given an interval $[\ell, r]$ to process. 
On this recursive call (subproblem), the goal is to calculate the distance estimates $\hat{d}^x(v)$ for alive nodes $v$, where $x = (\ell + r)/2$ is the midpoint of $[\ell,r]$.
Since the algorithm processes the subproblems in the recursion tree from top to bottom, it has already processed the subproblems $\ell$ and $r$, i.e., the distance estimates are calculated for the alive nodes in $G_\ell$ and $G_r$.

Just to repeat, a vertex $v$ is alive in $G_x$ if it is alive in $G_\ell$ and $G_r$, and its estimated distance is \emph{not} the same in $G_\ell$ and $G_r$, i.e., $\hat{d}^\ell(v) \neq \hat{d}^r(v)$. If a vertex is not alive, it is dead.   
A directed edge $e = (u,v)$ in $G_x$ is said to be \emph{alive} if $v$ is alive in $G_x$; otherwise, $e$ is \emph{dead}. After the description of the algorithm, we discuss how to efficiently maintain alive and dead vertices and edges. 
Although the algorithm does not store $\hat{d}^x(v)$ for the dead vertices $v$ in the subproblem $x$, we still use the notation $\hat{d}^x(v)$ to refer to $\hat{d}^\ell(v) = \hat{d}^r(v)$.

Now, we can define how the algorithm calculates $\hat{d}^x(v)$ for all alive vertices $v$. 
The algorithm creates a \emph{new graph} $G'_x$ whose vertex set is only the alive vertices in $G_x$ along with $s$. Then, for each alive edge $e\in G_x$ with $e = (u,v)$, there are two cases:
\begin{itemize}
    \item If both $u$ and $v$ are alive,  add $e$ to $G'_x$.
    \item If only $v$ is alive, add an edge from $s$ to $v$ with weight $\hat{d}^x(u) + w(u,v)$ to $G'_x$.
\end{itemize}
To compute $\hat{d}^x(u)$, where $u$ is a dead vertex in $G_x$, the algorithm needs to find the first ancestor of the current subproblem in the recursion tree in which $u$ is alive.
To do this, the algorithm does a binary search on the ancestors of node $x$ in the recursion tree, and for the first ancestor $x'$ where $u$ is alive in $G_{x'}$, it recovers the value of $\hat{d}^{x'}(u)=\hat{d}^x(u)$ using the binary search tree stored at node $x'$. 

The algorithm then computes the distance from $s$ to each vertex in $G'_x$ using Dijkstra's algorithm.  For any alive vertex $v$, the algorithm stores the length of the shortest path to $v$ found by Dijkstra's algorithm rounded up to the nearest integer power of $(1+\epsilon/\log m)$ as $\hat{d}^x(v)$.  

\paragraph{Efficiently Maintaining Alive Edges.}
For any $x$, let $m_x$ be the number of alive edges in $G_x$.   The algorithm maintains
a list of all alive edges at each time $x$.  

We now describe how to find the alive vertices and edges in $G_x$, where $x$ is the midpoint of $[\ell, r]$.
Note that if an edge $(u,v)$ is alive in $G_x$, then its head $v$ must be alive in $G_x$, which in turn implies that $v$ is alive in $G_\ell$ and $G_r$. Since $(u,v)$ has arrived before time $r$, it is alive in $G_r$.
Therefore the alive edges in $G_x$ are a subset of the alive edges in $G_r$.
To obtain the list of alive edges in $G_x$, the algorithm iterates over the alive edges in $G_r$ in $O(m_r)$ time, and removes the edges that either arrived after time $x$, or whose head node has the same estimated distance in both $G_\ell$ and $G_r$.   
While doing this, the algorithm additionally updates if each vertex is alive or not in $G_x$ in $O(m_r)$ time.

From this, the algorithm constructs $G'_x$ in $O(m_r)$ time, and runs Dijkstra's algorithm in $O(m_r\log n)$ time. 

\paragraph{Efficiently Maintaining $L_v$ Lists of Time Indexed Distances.}
To obtain the $L_v(i)$ values, initially, all the entries of $L_v$ are empty. 
At each time $x$, when we are calculating the distance estimate $\hat{d}^x(v)$ for an alive node $v$, we update $L_v$: suppose $(1+\epsilon)^{i-1} < \hat{d}^x(v) \leq (1+\epsilon)^{i}$.
If $L_v(i)$ is empty or $x<L_v(i)$, we set $L_v(i)$ to be $x$. Otherwise, $L_v(i)$ remains unchanged. In the end, the algorithm processes each of the lists, and for any empty $L_v(i)$, the algorithm sets it to be equal to the last non-empty entry of $L_v$ before $L_v(i)$.

\subsection{Analysis of the Offline Algorithm}
\label{sec:offline_analysis}

This section establishes the correctness and running time guarantees of the algorithm.   

\begin{lemma}
\label{lem:refined_approx}
   If $x$ is at level $i$, then for any vertex $v$, $d^x(v) \leq \hat{d}^x(v) \leq d^x(v)(1 + \epsilon/ \log m)^i$. 
\end{lemma}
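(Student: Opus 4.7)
The plan is to prove the lemma by induction on the level $i$. For the base case ($i = 0$), the only subproblems are $x = 0$ and $x = m$: initialization sets $\hat{d}^0(s) = 0$ and $\hat{d}^0(v) = \infty$ for $v \neq s$, and the algorithm runs Dijkstra's directly on $G_m$, so in both cases $\hat{d}^x(v) = d^x(v)$ and both inequalities hold trivially. For the inductive step, fix $x = (\ell + r)/2$ at level $i$; since $x$'s level is one more than the maximum of $\ell$'s and $r$'s, the inductive hypothesis applies to both $\ell$ and $r$ with exponent at most $i - 1$. If $v$ is dead at $x$, then $\hat{d}^x(v) = \hat{d}^\ell(v) = \hat{d}^r(v)$, so $\hat{d}^x(v) \geq d^\ell(v) \geq d^x(v)$ (by induction plus monotonicity of $d^t(v)$ in $t$), and $\hat{d}^x(v) \leq d^r(v)(1 + \epsilon/\log m)^{i-1} \leq d^x(v)(1 + \epsilon/\log m)^{i-1}$, both of which suffice.

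Now suppose $v$ is alive, so $\hat{d}^x(v)$ is the Dijkstra distance in $G'_x$ from $s$ to $v$ rounded up to the next power of $(1 + \epsilon/\log m)$. I will sandwich the unrounded Dijkstra distance between $d^x(v)$ and $d^x(v)(1 + \epsilon/\log m)^{i-1}$; the single rounding step then contributes at most one extra factor of $(1+\epsilon/\log m)$, yielding the claim. For the lower bound, observe that every virtual edge in $G'_x$ has tail $s$, so any $s$-to-$v$ path in $G'_x$ uses at most one virtual edge, and it must be the first. If such a path starts with a virtual edge $s \to u_1$ of weight $\hat{d}^x(u') + w(u', u_1)$ for some dead vertex $u'$, we replace that edge by a genuine shortest $s \leadsto u'$ path in $G_x$ of length $d^x(u') \leq \hat{d}^x(u')$ (by induction applied to $u'$), producing a walk in $G_x$ of length at most the $G'_x$-path length. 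Hence the Dijkstra distance is at least $d^x(v)$.

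The upper bound for alive $v$ is the main obstacle, because the inductive error factor $(1+\epsilon/\log m)^{i-1}$ should only be applied to a prefix of the path rather than its entirety. Take a shortest path $P = (s = v_0, v_1, \ldots, v_k = v)$ in $G_x$ and let $j$ be the largest index with $v_j$ dead at $x$ (if no such $j$ exists, all of $P$ lies in $G'_x$ and the bound is immediate). The vertices $v_{j+1}, \ldots, v_k$ are all alive, so $G'_x$ contains the path $s \to v_{j+1} \to v_{j+2} \to \cdots \to v_k$ of length $\hat{d}^x(v_j) + \bigl(d^x(v) - d^x(v_j)\bigr)$, using that the suffix $v_j, v_{j+1}, \ldots, v_k$ of $P$ is itself a shortest $v_j$-to-$v$ path of length $d^x(v) - d^x(v_j)$. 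Applying the already-established dead-vertex bound $\hat{d}^x(v_j) \leq d^x(v_j)(1+\epsilon/\log m)^{i-1}$ together with $d^x(v_j) \leq d^x(v)$ gives
\[
\hat{d}^x(v_j) + d^x(v) - d^x(v_j) \;\leq\; d^x(v_j)\bigl[(1+\epsilon/\log m)^{i-1} - 1\bigr] + d^x(v) \;\leq\; d^x(v)(1+\epsilon/\log m)^{i-1},
\]
which upper-bounds the Dijkstra distance; rounding up contributes one more factor of $(1+\epsilon/\log m)$ and completes the induction.
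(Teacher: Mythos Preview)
Your proof is correct and follows essentially the same approach as the paper's: induction on the level, separate handling of dead vertices via the endpoints $\ell$ and $r$, and for alive $v$ decomposing a true shortest path at the last point where it leaves $G'_x$. Your organization is marginally cleaner—you dispose of the dead-vertex case for \emph{both} inequalities up front and phrase the upper-bound decomposition via the last dead vertex $v_j$, whereas the paper phrases it via the last edge $(a,b)$ of $q$ missing from $G'_x$ and then splits into the subcases ``$b$ dead'' (which recovers the dead-$v$ bound) versus ``$a$ dead, $b$ alive'' (which is exactly your $v_j = a$)—but the underlying argument is the same.
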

\begin{proof}
    We show the lemma by induction on $i$.  For $i = 0$, i.e., $x=0$ or $x=m$, we obtain the exact distance for every vertex. 
    
    First, we show that $d^x(v) \leq \hat{d}^x(v)$. If $v$ is dead in subproblem $x$, then $\hat{d}^x(v) = \hat{d}^\ell(v)$, where $\ell$ is the start of the interval of node $x$ in the recursion tree. So, the level of $\ell$ is smaller than $i$. By induction, $d^\ell(v) \leq \hat{d}^\ell(v)$. Also, since $\ell \leq x$, we have $d^x(v) \leq d^\ell(v)$. Therefore, 
    $d^x(v) \leq d^\ell(v) \leq \hat{d}^\ell(v) = \hat{d}^x(v).$ So, we can assume $v$ is alive.  
    Let $p = s, v_1, v_2, \cdots, v_k$ be the shortest path from $s$ to $v=v_k$ in $G'_x$, so $\hat{d}^x(v)$ is the length of $p$ rounded up to the nearest integer power of $1 + \epsilon/\log m$.  
    Note that the path $v_1, v_2, \cdots, v_k$ exists in $G_x$ as well, since all the edges that only appear in $G'_x$ start at $s$. 
    If the edge $(s,v_1)$ is present in $G_x$, then $d^x(v) \leq \hat{d}^x(v)$, as $p$ is a path from $s$ to $v$ in $G_x$, which means that $d^x(v)$, the length of the shortest path in $G_x$ from $s$ to $v$, is at most the length of $p$.
    Otherwise, there exists an edge $(u, v_1)$ in $G_x$, such that $u$ is dead at time $x$, and the edge $(s,v_1)$ has weight $\hat{d}^{x}(u) + w(u, v_1)$.
    As shown above, since $u$ is dead at time $x$, we have $\hat{d}^{x}(u) \geq d^x(u)$.
    Therefore, 
    \[\hat{d}^x(v) \geq \hat{d}^{x}(u) + w(u, v_1) + d_{G'_x}(v_1, v) \geq d^x(u) + w(u, v_1) + d_{G_x}(v_1, v) \geq d^x(v).\]
    
    Next, we show that $\hat{d}^x(v) \leq d^x(v)(1 + \epsilon/ \log m)^i$. 
    Let $q$ be the shortest path from $s$ to $v$ in $G_x$.
    If $q$ is also present in $G'_x$, it follows that $\hat{d}^x(v) \leq d^x(v)(1 + \epsilon/ \log m)$.
    Otherwise, let $(a,b)$ be the last edge in $q$ that does not exist in $G'_x$. 
    Thus, either $a$ or $b$ is not alive in $G_x$. 
    
    First, let's assume $b$ is not alive in $G_x$.  Thus $b$ must be the last vertex in $q$---if there were an edge $(b,c)$ in $q$, this edge would exist in $G'_x$, so $b$ would be alive in $G_x$ (note that $b \neq s$).
    Since $b$ is the last vertex in $q$, and $v=b$ is dead in $G_x$, it follows that 
    \[
        \hat{d}^x(v) = \hat{d}^r(v) 
        \leq d^r(v)(1 + \epsilon/ \log m)^{i-1} 
        \leq d^x(v)(1 + \epsilon/ \log m)^{i-1},
    \]
    where $r$ is the end of the interval of node $x$ in the recursion tree.
    The first inequality follows from the induction hypothesis and the fact that the level of $r$ is strictly less than $i$, and the second inequality is because $x \leq r$ and adding more edges can only decrease the distance of each node from $s$.
    
    Otherwise, assume $b$ is alive in $G_x$, which means that $a$ is dead in $G_x$.
    Thus, there is an edge $(s,b)$ in $G'_x$ with weight $\hat{d}^x(a)+w(a,b)=\hat{d}^{r}(a) + w(a,b)$.
    Prepending this edge to the suffix of $q$ beginning at $b$ results in a path in $G'_x$ from $s$ to $v$ of length $\hat{d}^{r}(a) + w(a,b) + d_{G_x}(b,v)$. 
    Since the level of $r$ is less than $x$, 
    by the induction hypothesis we have $\hat{d}^{r}(a) \leq d^{r}(a)(1+\epsilon/\log m)^{i-1}$. 
    Hence, 
    \begin{align*}
      \hat{d}^x(v) &\leq  \left(\hat{d}^{r}(a) + w(a, b) + d_{G_x}(b, v)\right)(1 + \epsilon/\log m) \\ 
      &\leq \left(d^r(a)(1+\epsilon/\log m)^{i-1}+ w(a, b) + d_{G_x}(b, v)\right)(1 + \epsilon/\log m) \\
      &\leq \left(d^x(a) + w(a,b) + d_{G_x}(b, v)\right)(1 + \epsilon/\log m)^i \\
      &= d^x(v)(1 + \epsilon/\log m)^i. \qedhere
    \end{align*}
\end{proof}

With the correctness in place, the following lemma completes the proof of Theorem~\ref{thm:offline-approx} by bounding the running time of the offline algorithm. 

\begin{lemma}
\label{lem:offline-runtime}
    The offline algorithm runs in time $O(m\log (nW)  (\log^3 n) (\log \log n)/\epsilon)$.
\end{lemma}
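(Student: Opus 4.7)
The plan is to charge the total running time to individual subproblems in the recursion tree and then aggregate via a structural bound on how often each vertex (and hence each edge) is alive. The main inputs to the charging are (a) a per-vertex upper bound on the number of subproblems in which it is alive, and (b) a per-subproblem upper bound on the work done in terms of $m_x$ and $m_{r_x}$.

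Step 1 (structural alive-count bound): For each vertex $v$, the sequence $\hatd^t(v)$ (over midpoint times $t$) is non-increasing and lies among integer powers of $(1+\epsilon/\log m)$ in $[1,nW]$, so it strictly decreases at most $\log_{1+\epsilon/\log m}(nW) = O(\log(nW)\log m/\epsilon)$ times. A subproblem with interval $[\ell,r]$ marks $v$ alive iff $\hatd^\ell(v) \neq \hatd^r(v)$, which requires at least one decrease time in $(\ell,r]$. Since each fixed time lies in at most $O(\log m)$ intervals of the recursion tree (one per level), $v$ is alive in at most $O(\log(nW)\log^2 m/\epsilon)$ subproblems. Charging each alive edge to its head vertex and letting $m_t$ denote the number of alive edges in the subproblem with midpoint $t$, summing over all $m$ edges gives $\sum_t m_t = O(m\log(nW)\log^2 m/\epsilon)$.

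Step 2 (per-subproblem cost): For subproblem $x$ with interval $[\ell,r]$, the work consists of: (i) iterating the $m_{r_x}$ alive edges of $G_{r_x}$ and filtering them to the alive edges of $G_x$ in $O(m_{r_x})$ time, using BST probes against $\ell$'s and $r$'s stored distance tables to decide head-alive status; (ii) running Dijkstra on $G'_x$ in $O(m_x\log n)$ time; and (iii) for each alive edge $(u,v)$ of $G_x$ whose tail $u$ is dead at time $x$, retrieving $\hatd^x(u)$ by binary searching among the $O(\log m)$ ancestors of $x$ in the recursion tree, where each probe is a $O(\log n)$ BST membership test, for $O(\log n\log\log n)$ per such lookup. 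Since the number of dead-tail lookups is at most $m_x$, the per-subproblem cost is $O(m_{r_x} + m_x\log n\log\log n)$.

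Step 3 (aggregation): Since each time $t$ serves as the right endpoint of only $O(\log m)$ subproblems, we have $\sum_x m_{r_x} \leq O(\log m)\sum_t m_t = O(m\log(nW)\log^3 m/\epsilon)$. The Dijkstra plus dead-tail contribution totals $O(\log n\log\log n)\cdot\sum_x m_x = O(\log n\log\log n)\cdot\sum_t m_t = O(m\log(nW)\log^2 m\log n\log\log n/\epsilon)$. Using $\log m = O(\log n)$ (since $m \leq n^2$), both terms fit inside $O(m\log(nW)(\log^3 n)(\log\log n)/\epsilon)$, yielding the lemma.

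The main technical obstacle will be pinning down the per-subproblem decomposition in Step 2 so that the iteration-and-filter cost really is $O(m_{r_x})$ amortized (after a one-time $O(\log n)$-per-distinct-head BST probe to compare $\hatd^\ell$ and $\hatd^r$) and so that the ancestor binary search for a dead tail multiplies cleanly to $O(\log n\log\log n)$ rather than a larger factor. Once the per-subproblem cost is crisply $O(m_{r_x} + m_x \log n \log\log n)$, the aggregation in Step 3 is a routine accounting based on the right-endpoint multiplicity $O(\log m)$ and the bound on $\sum_t m_t$ from Step 1.
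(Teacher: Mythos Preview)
Your proposal is correct and follows essentially the same approach as the paper: bound the number of subproblems in which each vertex is alive by $O(\log(nW)\log^2 m/\epsilon)$ via the bucket-decrease argument, charge each alive edge to its head to get $\sum_t m_t = O(m\log(nW)\log^2 m/\epsilon)$, and multiply by the $O(\log n\log\log n)$ per-alive-edge cost arising from the ancestor binary search when building $G'_x$. Your Step~3 makes explicit the right-endpoint multiplicity bound $\sum_x m_{r_x} \le O(\log m)\sum_t m_t$ that the paper leaves implicit when it absorbs the $O(m_r)$ filtering cost into the overall $O(\log n\log\log m)$-per-alive-edge accounting; the obstacle you flag about the filter step is handled in the paper simply by passing down (from the parent call) the estimates $\hat d^\ell$ and $\hat d^r$ for all heads of alive edges in $G_r$, so no BST probe is needed there and the $O(m_r)$ bound holds.
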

\begin{proof}
    Let $m_y$ be the number of alive edges in a subproblem $y$.
    For a subproblem $[\ell, r]$, the time to find the alive edges and nodes in $G_x$ is $O(m_r)$, where $x$ is the midpoint of $[\ell, r]$. 
    For an alive edge $e = (u,v)$ in $G_x$, if $u$ is alive in $G_x$, then inserting $e$ in $G'_x$ takes $O(1)$ time.
    Otherwise, the algorithm needs to recover $\hat{d}^x(u)$ from the lowest ancestor of $x$ in the recursion tree in which $u$ is alive in order to calculate the weight of the edge $(s, v)$ in $G'_x$. 
    Each node $y$ in the recursion tree maintains $\hat{d}^{y}(w)$ for all alive vertices $w$ in $G_{y}$ in a balanced binary search tree.
    Therefore, the algorithm can check whether a vertex is alive in a subproblem in time $O(\log n)$.
    Also, if $w$ is alive in $G_y$, it takes $O(\log n)$ time to find $\hat{d}^y(w)$.
    Since node $x$ has at most $\log m$ ancestors, recovering $\hat{d}^x(u)$ can be done in $O(\log n \log\log m)$ time by doing a binary search on the ancestors.
    Hence, building $G'_x$ takes $O(m_x\log n \log\log m)$ time. 
    The time to run Dijkstra's on $G_x'$ is $O(m_x \log n)$.
    Also, the time to build the balanced binary search tree corresponding to subproblem $x$ is $O(m_x \log n)$, as the number of alive nodes in $G_x$ is bounded by the number of alive edges in $G_x$.
    Thus, the runtime of the algorithm is bounded by $O(\log n \log \log m)$ times the total number of alive edges in all the subproblems.
    
    For a given $i$ and $v$, let $x$ be the time when $\hat{d}(v)$ decreases from $( 1 + \epsilon/\log m)^{i+1}$ to $(1 + \epsilon/\log m)^i$.  
    Then $v$ is alive in any subproblem that contains $x$; there is at most 1 such subproblem in each level of the recursion tree.
    For each vertex $v$, summing over all $\log_{1 + \epsilon/\log m} (nW) = O(\log (nW) \log m/\epsilon)$ values of $i$ and the $\log m$ levels of the recursion tree, we have that there are in total $O(\log (nW) \log^2 m/\epsilon)$ subproblems in which $v$ is alive.
    Since an edge is only alive if its head is alive, there are only a total of $m \log (nW) \log^2 m/\epsilon$ alive edges over all subproblems.  
    Substituting $\log m = O(\log n)$, we obtain an aggregate running time $O(m\log (nW)(\log^3 n)(\log \log n)/\epsilon)$. 
\end{proof}
\section{Incremental Shortest Paths with Predictions}

This section gives the algorithm and analysis for the online version of the problem with predictions. Recall that in this model the edges arrive online.  The goal of the algorithm is to maintain an array $D$ of length $n$.  At any time $t$, $D[i]$ should contain a $(1 + \epsilon)$-approximation of $d^t(v_i)$.

\subsection{The Algorithm}
This section describes the online algorithm. Before any edges arrive, the algorithm is given a prediction $\hat{\sigma}$ of the edge arrival sequence.  

\paragraph{Updated Predictions.}
Our algorithm dynamically maintains the predicted sequence of edge inserts by changing $\hat{\sigma}$ online.  Specifically, after the $t$th edge arrival, the algorithm will construct a new prediction which we refer to as $\hat{\sigma}_t$.  Initially,  $\hat{\sigma}_0 = \hat{\sigma}$. We call the sequence  $\hat{\sigma}_t$ the \defn{updated prediction}. 
Intuitively, the algorithm modifies the updated prediction after each edge arrival based on which edge actually arrives. 
 
For each edge $e$, let $\widehat{\ind}_t(e)$ be the position of $e$ in $\pred_t$.  Let $\ind(e)$ denote the position of $e$ in $\sigma$ (i.e.\ the true time when $e$ arrives). If $e \notin \pred_t$, then we define $\widehat{\ind}_t(e) := m+1$. 

Our algorithm updates the sequence of edges to agree with all edges seen so far.  In other words, at time $t$, the algorithm maintains that for all edges $e$ that arrive by $t$ in $\sigma$, $\widehat{\ind}_{t}(e) = \ind(e)$.

\paragraph{Maintaining Metadata from the Offline Algorithm.}
Since the online algorithm continuously updates the result of the offline approach, it stores information to help it navigate the result of the offline algorithm.

Recall that the offline algorithm stores, for each time $t$, the set of vertices and edges alive at time $t$, as well as a distance estimate $\hat{d}^{t}(v)$ for any vertex $v$ alive at time $t$.  
The online algorithm maintains exactly the same information.  
We will see that the algorithm updates these estimates continuously as the updated prediction changes.

\paragraph{Algorithm Description.}

First, let us describe the preprocessing performed by the algorithm before any edges arrive.
The algorithm sets $D[i] = \infty$ for all $i$, except for $D[1] = 0$, which represents the source. 
Then, the algorithm runs the offline algorithm from Section~\ref{sec:offline} on $\hat{\sigma}$.   
By running the offline algorithm, the algorithm will store $\hat{d}^t(v)$ for all times $t$ and all nodes $v$ alive at time $t$.   

Now, let us describe how the algorithm runs after the $t$th edge is inserted.
At each time $t$, the algorithm \emph{rebuilds} a subset of all the subproblems.  These rebuilds update the precomputed $\hat{d}^t(v)$. When a subproblem is rebuilt, all of its descendants are rebuilt as well.  The rebuilding procedure is formally described below. 

Let $e_t$ be the edge that arrives at time $t$, and let $t'= \widehat{\ind}_{t-1}(e_t)$, i.e., $t'$ is the predicted arrival time for $e_t$ in the updated predictions immediately before it arrives. 
Note that since the first $t-1$ edges of $\sigma$ and $\pred_{t-1}$ are the same, we always have $t' \geq t$ (assuming the edges in the input sequence are distinct).

First, we describe how the algorithm updates $\hat{\sigma}_{t-1}$ to get $\hat{\sigma}_t$.
\begin{itemize}
    \item If $t = t'$, i.e., the position of $e_t$ is predicted correctly at the time it is seen, then set $\pred_t := \pred_{t-1}$.
    \item If $t \neq t'$ and $t' \leq m$; that is, $e_t$ is predicted to arrive at a later time. In this case, the algorithm moves $e_t$ from position $t'$ to position $t$ in $\hat{\sigma}_{t-1}$, and shifts everything between $t$ and $t'$ one slot to the right to obtain $\pred_t$. 
    \item If $t \neq t'$ and $t' = m + 1$; that is, at time $t-1$, $e_t$ is not in the predicted sequence.  In this case, the algorithm inserts $e_t$ in position $t$, shifts the rest of the sequence one slot to the right, and truncates the predicted sequence to length $m$ to obtain $\pred_t$. 
\end{itemize}

\paragraph{Rebuilding Subproblems.} 
Next, the algorithm \defn{rebuilds} subproblems.  Recall that the algorithm given in Section~\ref{sec:offline} is recursive; each of its recursive calls can be represented by a node in the recursion tree. 

To rebuild a subproblem $[\ell, r]$, the offline algorithm is called on $[\ell, r]$ using the updated prediction $\hat{\sigma}_t$.  The rebuild makes recursive calls as normal.
Any time a subproblem with midpoint $x$ is rebuilt, the value of $\hat{d}^{x}(v)$ is updated based on the rebuild for all alive vertices $v$.  

Let $[\ell_m, r_m]$ be the largest subproblem with $t \leq (\ell_m+r_m)/2 < t'$; then the algorithm rebuilds $[\ell_m, r_m]$. As mentioned above, all descendants of $[\ell_m, r_m]$ will be recursively called, and therefore rebuilt as well.  
In other words, the algorithm rebuilds all the subproblems $[\ell, r]$, with $t \leq (\ell+r)/2 < t'$, and all of their descendants from top to bottom (so in the first case, no subproblem gets rebuilt). 
See Figure~\ref{fig:tree_jump} for an illustration. 
Let $\rebuild(t)$ be the set of all times $t''$ such that $t''\in [\ell, r]$ for some $[\ell, r]$ rebuilt at time $t$. If no subproblem is rebuilt at time $t$, we define $\rebuild(t) := \{t\}$ to insure that $t$ is always in $\rebuild(t)$. 

\paragraph{Updating the Distance Array.}
Finally, the algorithm must update the array $D$ containing the estimated distance to each vertex. 
When a new edge is inserted, some of the entries in $D$ need to be overwritten, as their estimated distance might have changed. 
At each time $t$, we want to have $D[i] = \hat{d}^t(v_i)$ for all $i$. 
To do so, the algorithm does the following for each time $t' \in \rebuild(t)$, where $t' \leq t$, in sorted order. The algorithm iterates through all alive vertices $v_i$ in $G_{t'}$, and sets $D[i] = \hat{d}^{t'}(v_i)$. 

\begin{figure}
    \centering
    \begin{tikzpicture}[main/.style = {fill = black, circle, inner sep = 2pt}, 
    label/.style = {fill = none, circle, inner sep = 3pt},
    scale = 0.9]
        \draw[step = 1cm, gray, thin] (0,1) grid (16,5);
        \filldraw[fill = white, draw = black] (0,5) rectangle (16,4);
        \foreach \i in {0, 8}
            \filldraw[fill = white, draw = black] (\i,4) rectangle (\i+8,3); 
        \foreach \i in {0, 4, ..., 12}
            \filldraw[fill = white, draw = black] (\i,3) rectangle (\i+4,2);    
        \foreach \i in {0, 2, ..., 14}
            \filldraw[fill = white, draw = black] (\i,2) rectangle (\i+2,1);

        \foreach \i in {0, 1, 2, ..., 16}
            \node[label] at (\i, 0.5) {$G_{\i}$};    
   
        \draw[->, thick, black] (8,0.2) to [out = -150, in = -30] (4,0.2);
        \filldraw[fill = gray!60, draw = black] (0,4) rectangle (8,3);
        \filldraw[fill = gray!60, draw = black] (4,3) rectangle (8,2);
        \filldraw[fill = gray!60, draw = black] (4,2) rectangle (6,1);
        \filldraw[fill = gray!60, draw = black] (6,2) rectangle (8,1);

        \filldraw[fill = gray!20, draw = black] (0,3) rectangle (4,2);
        \filldraw[fill = gray!20, draw = black] (2,2) rectangle (4,1);
        \filldraw[fill = gray!20, draw = black] (0,2) rectangle (2,1);
        
        \node[main] at (8, 4.5) {};
        \foreach \i in {4, 12}
            \node[main] at (\i, 3.5) {};
        \foreach \i in {2, 6, 10, 14}
            \node[main] at (\i, 2.5) {};
        \foreach \i in {1, 3, ..., 16}
            \node[main] at (\i, 1.5) {};
    
        \foreach \i in {4, 12}
           \draw (8, 4.5) -- (\i, 3.5);
        \foreach \i in {2, 6}
           \draw (4, 3.5) -- (\i, 2.5);
        \foreach \i in {10, 14}
           \draw (12, 3.5) -- (\i, 2.5);
        \foreach \i in {1, 3}
           \draw (2, 2.5) -- (\i, 1.5);
        \foreach \i in {5, 7}
           \draw (6, 2.5) -- (\i, 1.5);
        \foreach \i in {9, 11}
           \draw (10, 2.5) -- (\i, 1.5);
        \foreach \i in {13, 15}
           \draw (14, 2.5) -- (\i, 1.5);  
    \end{tikzpicture}    
    \vspace{-.12in}
    \caption{An illustration of the subproblems that get rebuilt during one edge insertion. In this example, at time $t=4$, the edge $e_4$ was predicted to arrive but edge $e_8$ has arrived. So the algorithm moves edge $e_8$ from position $t'=8$ to position $t=4$. The algorithm then rebuilds all the subproblems with $t \leq (\ell+r)/2 < t'$ (colored dark gray) and their descendants (colored light gray) from top to bottom.}
    \label{fig:tree_jump}
    \vspace{-.03in}
\end{figure}
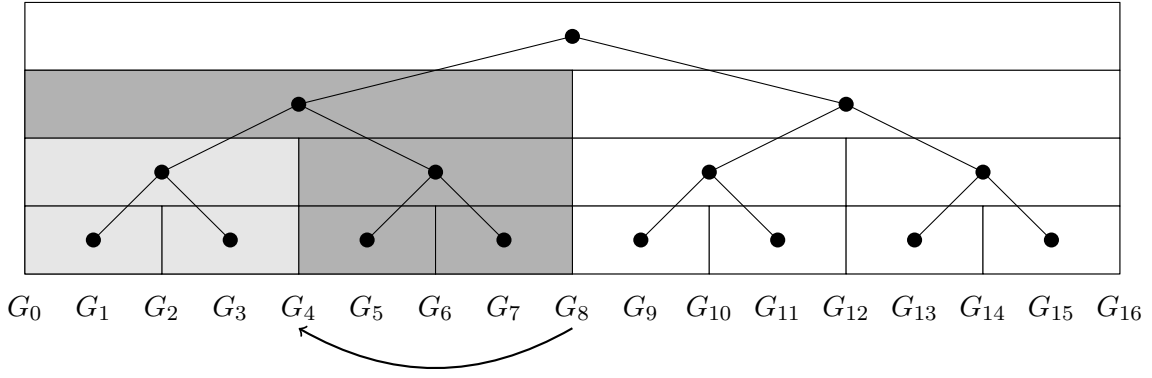

\subsection{Analysis}

This section establishes the theoretical guarantees of the algorithm. That is, the correctness of the approximation of the distances as well as the runtime bounds. 

We begin with some structure that applies to any run of the \emph{offline} algorithm.  This will help us argue correctness of our algorithm.
\begin{lemma}
\label{lem:dead_estimates_dont_change}
    For any sequence of edge inserts $\sigma'$, let $\hat{d}^t(v)$ be the distance estimates that result from running the offline algorithm on $\sigma'$.  Then 
    for any vertex $v$ and time $t \geq 1$, 
    if $v$ is dead at time $t$, then $\hat{d}^t(v) = \hat{d}^{t-1}(v)$.
\end{lemma}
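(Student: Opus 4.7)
Let $[\ell, r]$ be the unique subproblem whose midpoint is $t$, so $t = (\ell + r)/2$. The assumption that $v$ is dead at time $t$ unpacks, by the algorithm's definition in Section~\ref{sec:offline_algorithm}, to $\hat{d}^\ell(v) = \hat{d}^r(v)$, and the inferred-value convention stated in Section~\ref{sec:offline} then sets $\hat{d}^t(v)$ to this common value; call it $D$. The goal is to show $\hat{d}^{t-1}(v) = D$ as well.

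The key observation is that deadness propagates all the way down the recursion subtree rooted at $[\ell, r]$. Indeed, if $v$ is dead in a subproblem $[\ell', r']$ with midpoint $x'$, then $\hat{d}^{\ell'}(v) = \hat{d}^{r'}(v)$ by definition, and the inferred convention forces $\hat{d}^{x'}(v) = \hat{d}^{\ell'}(v) = \hat{d}^{r'}(v)$. Hence $v$ is also dead in each of the two child subproblems $[\ell', x']$ and $[x', r']$. A straightforward induction on depth then shows that $v$ is dead in every descendant of $[\ell, r]$, and that $\hat{d}^{x^*}(v) = D$ holds at the midpoint $x^*$ of any such descendant, as well as at its endpoints.

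The remaining step is to locate $t-1$ inside the recursion. If $r - \ell = 2$, i.e., $[\ell, r]$ is a leaf subproblem, then $t - 1 = \ell$, so $\hat{d}^{t-1}(v) = \hat{d}^\ell(v) = D$ directly, and we are done. Otherwise $r - \ell \geq 4$ and $t - 1$ lies strictly inside the open interval $(\ell, r)$. Because $r - \ell$ is a power of $2$ and the recursion halves intervals until the leaf size is $2$, every integer in $(\ell, r)$ is the midpoint of exactly one descendant of $[\ell, r]$—a standard fact about dyadic intervals that can be checked by a one-line induction on the depth of the subtree. In particular, $t - 1$ is the midpoint of some descendant $[\ell'', r'']$, and the propagation argument above tells us $v$ is dead there, so $\hat{d}^{t-1}(v) = D = \hat{d}^t(v)$.

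The main subtlety is the propagation step itself: one has to be careful that the inferred value $\hat{d}^{x'}(v)$ at the midpoint of a dead subproblem is literally the same number that the child subproblems read off their endpoints, so that deadness is transmitted exactly rather than only approximately. Once this bookkeeping is nailed down, the dyadic-partition observation makes the placement of $t - 1$ automatic and no further numerical argument is needed.
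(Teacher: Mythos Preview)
Your proof is correct and follows essentially the same approach as the paper: both argue that deadness at the midpoint $t$ of $[\ell,r]$ propagates to every descendant subproblem, and then locate $t-1$ either as $\ell$ (in the leaf case) or as the midpoint of some descendant, forcing $\hat d^{t-1}(v)$ to equal the common value $D$. The only cosmetic difference is that the paper explicitly disposes of the boundary case $t=m$ (where no vertex can be dead) before picking out the subproblem with midpoint $t$, whereas you implicitly rely on the hypothesis ruling that case out.
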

\begin{proof}
    If $t = m$, then $v$ cannot be dead at time $t$.
    Otherwise, since $1 \leq t \leq m-1$, $t$ is the midpoint of some subproblem $[\ell, r]$ with $r-\ell \geq 2$, which means $\ell < t$.  If $v$ is dead at time $t$, then $\hat{d}^\ell(v) = \hat{d}^t(v)$.  
    If $\ell = t-1$, the claim is proven.
    Otherwise, we must have $t-1 \in (\ell, t)$.  Therefore, $t-1$ is the midpoint of some descendant of $[\ell, t]$ in the recursion tree.  Since $v$ is dead during all such recursive calls, $\hat{d}^{t-1}(v) = \hat{d}^\ell(v) = \hat{d}^t(v)$.
  \end{proof}
   
If the edge that arrives at time $t$ is predicted to arrive at time $t'$ according to $\pred_{t-1}$, we say the edge \emph{jumps} from $t'$ to $t$. In such case, we say it \emph{jumps over} all the positions $t \leq i < t'$. 

The next two lemmas prove the correctness of the algorithm. 

\begin{lemma}\label{lem:pred_approx}
    For any time $t$ and any vertex $v$, we have $d^t(v) \leq \hat{d}^t(v) \leq d^t(v)(1+\epsilon)$.
\end{lemma}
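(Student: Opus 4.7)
The plan is to reduce this to the correctness of the offline algorithm (Lemma~\ref{lem:refined_approx}) via the following invariant: after processing the $t$-th arriving edge, for every subproblem with midpoint $x$ and every vertex $v$, the value $\hat{d}^x(v)$ stored by the online algorithm equals the value that the offline algorithm of Section~\ref{sec:offline} would produce on input $\hat{\sigma}_t$. Since the first $t$ positions of $\hat{\sigma}_t$ agree with $\sigma$ by construction of the updated prediction, the graph $G_t$ is the same under either sequence, so $d^t(v)$ is unambiguous. Applying Lemma~\ref{lem:refined_approx} (at level at most $\log m$) to the offline algorithm run on $\hat{\sigma}_t$ yields $d^t(v) \leq \hat{d}^t(v) \leq d^t(v)(1+\epsilon/\log m)^{\log m}$, which is at most $d^t(v)(1+\epsilon)$ after the $\epsilon$ adjustment described in Section~\ref{sec:prelim}.

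I would prove the invariant by induction on $t$. The base case $t=0$ is immediate because preprocessing runs the offline algorithm on $\hat{\sigma}_0 = \hat{\sigma}$. For the inductive step, assume the invariant holds at time $t-1$ and analyze the rebuild at time $t$. The key structural observation is that $\hat{\sigma}_{t-1}$ and $\hat{\sigma}_t$ differ only within positions $[t, t']$, and in particular, the \emph{set} of edges in positions $\{1, \ldots, x\}$ is identical in both sequences whenever $x < t$ or $x \geq t'$. Consequently, the graph $G_x$, and hence the offline computation performed at any subproblem with midpoint $x$ satisfying $x \notin [t, t'-1]$, is unaffected by the update.

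Let $[\ell_m, r_m]$ denote the largest subproblem with midpoint in $[t, t'-1]$. Every subproblem is either (i) in the subtree rooted at $[\ell_m, r_m]$, which is exactly the rebuild set, or (ii) an ancestor of $[\ell_m, r_m]$ or a node in a disjoint subtree. For a subproblem of type (ii), the maximality of $[\ell_m, r_m]$ together with the binary-tree structure of the recursion implies that its midpoint and all of its ancestors' midpoints lie outside $[t, t'-1]$; combining this with the structural observation and the inductive hypothesis, its stored $\hat{d}$ value is unchanged and still equals what the offline algorithm on $\hat{\sigma}_t$ would produce. For a subproblem of type (i), the algorithm re-runs the offline recursion top-down, so by the time the rebuild reaches it, all of its ancestors already store the offline-on-$\hat{\sigma}_t$ values, and the Dijkstra computation on $G'_x$ therefore reproduces the offline-on-$\hat{\sigma}_t$ value of $\hat{d}^x(v)$ for each alive $v$.

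The main obstacle is the type (ii) step: showing that \emph{all} ancestors of a not-rebuilt subproblem also have midpoints outside $[t, t'-1]$. This uses the fact that in this binary recursion tree, midpoints along any chain of ancestors are endpoints of nested intervals, so if any such ancestor had midpoint in $[t, t'-1]$ it would itself be a larger subproblem with midpoint in $[t, t'-1]$, contradicting the definition of $[\ell_m, r_m]$. Once this structural point is nailed down, the induction goes through and, combined with Lemma~\ref{lem:refined_approx} as described above, yields $d^t(v) \leq \hat{d}^t(v) \leq d^t(v)(1+\epsilon)$.
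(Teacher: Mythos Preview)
Your proposal is correct and follows essentially the same approach as the paper: both establish by induction on $t$ the invariant that the online algorithm's stored estimates coincide with those of the offline algorithm run on $\hat{\sigma}_t$, and then invoke Lemma~\ref{lem:refined_approx} together with the fact that $\hat{\sigma}_t$ agrees with $\sigma$ on its first $t$ entries. The paper phrases the inductive step as a minimal-counterexample contradiction on a non-rebuilt subproblem while you give a direct type~(i)/type~(ii) split, but the underlying structural fact---that every subproblem with midpoint in $[t,t'-1]$ lies in the subtree of $[\ell_m,r_m]$, so a non-rebuilt subproblem and all its ancestors have midpoints outside $[t,t'-1]$ and hence see the same edge prefix and the same ancestor estimates---is identical; just tighten your final paragraph, since ``a larger subproblem'' is not the right contradiction (the point is that any ancestor with midpoint in $[t,t'-1]$ would force the node into the rebuilt subtree).
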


\begin{proof}
    For each time $t=0,\ldots,m$, let $\mathcal{T}_t$ be the recursion tree of all subproblems when the offline algorithm is run on $\hat{\sigma}_t$, and let $\hat{\mathcal{T}}_t$ be the recursion tree of the online algorithm after $t$ edge insertions. 
    In each of these trees, for each tree node $x$, the additional information of $\hat{d}^x(v)$ for all alive vertices $v$ in $G_x$ is stored in a balanced binary search tree. 
    
    We show $\hat{\mathcal{T}}_t = \mathcal{T}_t$ for all $t=0,\ldots,m$, which proves that the distance estimates for all the nodes are similar in $\hat{\mathcal{T}}_t$ and $\mathcal{T}_t$. 
    The result then follows from Lemma~\ref{lem:refined_approx} and the fact that $\pred_t(i) = \sigma(i)$ for $i = 1, \cdots, t$.   

    We prove this by induction on $t$. 
    For $t=0$, since the algorithm starts by running the offline algorithm on $\hat{\sigma}_0$, it follows that $\hat{\mathcal{T}}_0 = \mathcal{T}_0$. 
    Let $t \geq 1$ and assume $\hat{\mathcal{T}}_{t-1} = \mathcal{T}_{t-1}$.
    We want to show $\hat{\mathcal{T}}_t = \mathcal{T}_t$.
    
    At time $t$, if the newly arrived edge $e_t$ was correctly predicted, we have $\hat{\sigma}_{t}=\hat{\sigma}_{t-1}$ and the online algorithm does not change the recursion tree, meaning that $\hat{\mathcal{T}}_t=\hat{\mathcal{T}}_{t-1}$.
    Also, since $\hat{\sigma}_{t}=\hat{\sigma}_{t-1}$, we have $\mathcal{T}_t=\mathcal{T}_{t-1}$, and by the induction hypothesis, we conclude that $\hat{\mathcal{T}}_t=\mathcal{T}_t$.
    So, assume that $t'=\widehat{\ind}_{t-1}(e_t)>t$. So $e_t$ jumps from $t'$ to $t$, and the online algorithm rebuilds all the subproblems $[\ell,r]$ where $e_t$ jumps over their midpoint $x=(\ell+r)/2$ and all their descendants.  

    We show that the algorithm rebuilds all the necessary subproblems.
    In other words, we show each subproblem $x$ that is not rebuilt, i.e., is the same in $\hat{\mathcal{T}}_{t}$ and $\hat{\mathcal{T}}_{t-1}$, is also the same in $\mathcal{T}_{t}$ and $\mathcal{T}_{t-1}$.
    Since, by the induction hypothesis, $x$ is the same in trees $\hat{\mathcal{T}}_{t-1}$ and $\mathcal{T}_{t-1}$, this proves that $x$ is the same in $\hat{\mathcal{T}}_{t}$ and $\mathcal{T}_{t}$, as desired.
    
    Assume for sake of contradiction that there is an interval $[\ell, r]$  with midpoint $x$ such that $e_t$ does not jumps over the midpoint of $[\ell, r]$ or the midpoint of any of the ancestors of $[\ell, r]$, but $x$ is different in $\calT_{t-1}$ and $\mathcal{T}_t$, meaning that the binary search trees that store $\hat{d}^x(v)$ for alive vertices $v$ in $G_x$ are different in $\calT_{t-1}$ and $\mathcal{T}_t$.
    Without loss of generality, assume no ancestor of $x$ has these properties. 
    
    Since $x$ is not rebuilt, none of its ancestors is rebuilt either. By the choice of $x$, this means that all the ancestors of $x$ are the same in $\mathcal{T}_{t-1}$ and $\mathcal{T}_t$.
    The binary search tree stored at $x$ depends on the auxiliary graph $G'_x$, which in turn only depends on the alive nodes and edges in $G_x$, and the distance estimates $\hat{d}^x(u)$ for each dead vertex $u$ that is the tail of an alive edge. 
    For any vertex $u$ that is dead in $G_x$, $\hat{d}^x(u)$ is stored in the binary search tree of some ancestor $y$ of $x$, where $u$ is alive in $G_y$. 
    Since $y$ is the same in $\calT_{t-1}$ and $\calT_t$, these distance estimates are the same in $\calT_{t-1}$ and $\calT_t$.
    A node is alive in $G_x$ if and only if it is alive in $G_\ell$ and $G_r$, and its distance estimates differ at $\ell$ and $r$. 
    An edge is alive in $G_x$ in $\calT_t$ (respectively, $\calT_{t-1}$) if it is among the first $x$ edges in $\hat{\sigma}_t$ (respectively, $\hat{\sigma}_{t-1}$) , and its head is alive. 
    Since the edge $e_t$ has not jumped over $x$, we conclude that the set of the first $x$ edges in $\hat{\sigma}_t$ is similar to that of $\hat{\sigma}_{t-1}$.
    The set of alive vertices in $G_x$ only depends on the alive vertices in $G_\ell$ and $G_r$. Since $\ell$ and $r$ are ancestors of $x$, the binary search trees that store distance estimates to alive vertices in $G_\ell$ and $G_r$ are exactly the same in $\calT_{t-1}$ and $\calT_t$.
    Therefore, the set of alive vertices and edges in $G_x$ are the same in $\calT_{t-1}$ and $\calT_t$. 
    Putting everything together, we conclude that node $x$ is similar in $\calT_{t-1}$ and $\calT_t$, which contradicts the choice of $x$.  
\end{proof}

\begin{lemma}
\label{lem:D}
    After inserting edge $e_t$, we have $D[i] = \hat{d}^t(v_i)$ for $i=1,\ldots,n$.
\end{lemma}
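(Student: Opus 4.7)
The plan is to prove Lemma~\ref{lem:D} by induction on $t$. The base case $t=0$ is immediate: the preprocessing sets $D[1]=0$ and $D[i]=\infty$ for $i \neq 1$, which match $\hat{d}^0(s)=0$ and $\hat{d}^0(v_i)=\infty$. For the inductive step, I assume $D[i] = \hat{d}^{t-1}(v_i)$ with respect to $\calT_{t-1}$ immediately before $e_t$ arrives and show $D[i] = \hat{d}^{t}(v_i)$ with respect to $\calT_t$ after the rebuild and update steps. By the argument already established in the proof of Lemma~\ref{lem:pred_approx}, after the rebuild the algorithm's internal recursion tree coincides with $\calT_t$, so the precomputed $\hat{d}^{t''}(v_i)$ values consulted during the update step are exactly those the offline algorithm produces on $\hat{\sigma}_t$.

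Let $S(t'')$ denote the subproblem in the recursion tree whose midpoint is $t''$. A small structural observation I will use is that $\rebuild(t)$ is a single contiguous integer interval $[\ell_m, r_m]$ containing $t$ (or $\{t\}$ in the no-rebuild case): the top-level rebuilt subproblems all have midpoints in $[t, t'-1]$, and a direct tree-structural argument shows they must be nested inside a single subproblem whose range is $[\ell_m, r_m]$. The update then sweeps $t'' \in [\ell_m, t]$ in increasing order and overwrites $D[i]$ exactly when $v_i$ is alive in $G_{t''}$.

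For each vertex $v_i$ I distinguish two cases. \emph{Case A:} $v_i$ is alive at some $t'' \in [\ell_m, t]$ in $\calT_t$. Letting $t^*$ be the largest such $t''$, the algorithm writes $D[i] = \hat{d}^{t^*}_{\calT_t}(v_i)$ at iteration $t^*$ and never revisits $D[i]$, since $v_i$ is dead at every $t'' \in (t^*, t]$ and all such $t''$ lie in $\rebuild(t)$ by contiguity. Chaining Lemma~\ref{lem:dead_estimates_dont_change} along this run of dead times yields $\hat{d}^t_{\calT_t}(v_i) = \hat{d}^{t^*}_{\calT_t}(v_i)$, which is what $D[i]$ holds.

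\emph{Case B:} $v_i$ is dead at every $t'' \in [\ell_m, t]$ in $\calT_t$, so the update never writes to $D[i]$, and I must show the pre-update value $\hat{d}^{t-1}_{\calT_{t-1}}(v_i)$ already equals $\hat{d}^t_{\calT_t}(v_i)$. The key structural fact is that, when a rebuild occurs, $S(\ell_m)$ is a strict ancestor of the top-level rebuilt subproblem $[\ell_m, r_m]$ and is therefore not rebuilt; hence both $v_i$'s aliveness at $S(\ell_m)$ and the value $\hat{d}^{\ell_m}(v_i)$ coincide in $\calT_{t-1}$ and $\calT_t$. Since $v_i$ is dead at $S(\ell_m)$ in $\calT_t$, it is also dead there in $\calT_{t-1}$, and dead status propagates downward in the recursion tree (being alive at a subproblem requires being alive at both of its endpoint subproblems). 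Thus $v_i$ is dead at $S(t'')$ for every $t'' \in [\ell_m, t-1]$ in $\calT_{t-1}$ as well. Applying Lemma~\ref{lem:dead_estimates_dont_change} in both trees then produces the chain $\hat{d}^{t-1}_{\calT_{t-1}}(v_i) = \hat{d}^{\ell_m}_{\calT_{t-1}}(v_i) = \hat{d}^{\ell_m}_{\calT_t}(v_i) = \hat{d}^t_{\calT_t}(v_i)$; the no-rebuild subcase ($\calT_t = \calT_{t-1}$) follows from Lemma~\ref{lem:dead_estimates_dont_change} alone. The main obstacle throughout is exactly this case: transporting the ``dead'' status of $v_i$ between the two trees via the unchanged boundary subproblem $S(\ell_m)$ so that the untouched entry in $D$ turns out to still be correct.
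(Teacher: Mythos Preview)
Your proof is correct and follows essentially the same approach as the paper's: induction on $t$, splitting into the same two cases depending on whether $v_i$ is alive at some time in $\rebuild(t)\cap[0,t]$. Your treatment is in fact more careful than the paper's in Case~B. The paper simply writes ``$D[i]=\hat d^{t-1}(v_i)=\hat d^t(v_i)$'' by combining the induction hypothesis with Lemma~\ref{lem:dead_estimates_dont_change}, implicitly identifying $\hat d^{t-1}_{\calT_{t-1}}(v_i)$ with $\hat d^{t-1}_{\calT_t}(v_i)$; you make this identification explicit by routing through the unchanged boundary subproblem $S(\ell_m)$ (a strict ancestor of $[\ell_m,r_m]$, hence not rebuilt) and propagating deadness downward in both trees. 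That is exactly the step the paper leaves to the reader, and your argument for it is sound.
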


\begin{proof}
    We show the lemma by induction on $t$. The case $t = 0$ is trivial. 
        
    If $v_i$ is dead in $G_{t'}$ for every $t'\in \rebuild(t)$, where $t' \leq t$, then $v_i$ is dead at time $t$ (note that by definition we always have $t \in \rebuild(t)$).
    Also, in this case, $D[i]$ is not overwritten at time $t$. By Lemma~\ref{lem:dead_estimates_dont_change} and the induction hypothesis, $D[i] = \hat{d}^{t-1}(v_i) = \hat{d}^t(v_i)$. 

    Otherwise, let $t'$ be the largest time less than or equal to $t$ in $\rebuild(t)$ at which $v_i$ is alive.  By definition of the algorithm, $D[i] = \hat{d}^{t'}(v_i)$.
    If $t' = t$, then $D[i] = \hat{d}^{t}(v_i)$, and if $t' < t$, then by repeatedly applying Lemma~\ref{lem:dead_estimates_dont_change}, $\hat{d}^t(v_i) = \hat{d}^{t'}(v_i)$.
\end{proof}

Now we determine the aggregate runtime of the online algorithm. Consider dividing $\sigma$ into two sets of high- and low-error edges based on an integer parameter $\tau \geq 0$.  Let $\text{LOW}(\tau) = \{e \in \sigma : |\ind(e) - \widehat{\ind}_0(e)| \leq \tau\}$ and $\text{HIGH}(\tau) =  \{e_1, \ldots, e_m\} \setminus \text{LOW}(\tau)$. 
Therefore, $\text{HIGH}(\tau)$ is the set of edges whose initial predicted arrival time is more than $\tau$ slots away from their actual arrival time.

\begin{lemma}
\label{lem:jump}
    For any integer $\tau\geq 0$ and  position $i \in \{1,\ldots,m\}$, there are at most $\tau + 2|\text{HIGH}(\tau)|$ jumps over $i$.
\end{lemma}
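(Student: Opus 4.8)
The plan is to track, for a fixed position $i \in \{1,\ldots,m\}$, how many times some arriving edge $e_t$ jumps over $i$ — meaning $t \le i < t' = \widehat{\ind}_{t-1}(e_t)$ — over the entire run of the online algorithm. First I would classify each jump over $i$ according to whether the jumping edge $e_t$ belongs to $\text{LOW}(\tau)$ or $\text{HIGH}(\tau)$. The main work is to bound the number of jumps caused by $\text{LOW}(\tau)$ edges by $\tau + |\text{HIGH}(\tau)|$ and the number caused by $\text{HIGH}(\tau)$ edges by $|\text{HIGH}(\tau)|$, which sum to the claimed $\tau + 2|\text{HIGH}(\tau)|$.

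The $\text{HIGH}(\tau)$ bound is the easy half: each edge arrives exactly once, so it can cause at most one jump (at its arrival time), giving at most $|\text{HIGH}(\tau)|$ jumps from $\text{HIGH}(\tau)$ edges total — in particular over $i$. For the $\text{LOW}(\tau)$ half, the key quantity I would introduce is, at each time step $t$, the set $S_t$ of edges $e$ that have not yet arrived (so $\ind(e) > t$) but are currently predicted to arrive at or before $i$, i.e.\ $\widehat{\ind}_t(e) \le i$. I want to argue: (1) initially $|S_0 \cap \text{LOW}(\tau)|$ is small, because a $\text{LOW}(\tau)$ edge with $\widehat{\ind}_0(e) \le i$ has $\ind(e) \le i + \tau$, and there are only $\tau$ positions in $(i, i+\tau]$ — wait, I need to also allow $\ind(e) \le i$, but those can't be in $S_0$ since they'd have arrived; so $|S_0 \cap \text{LOW}(\tau)| \le \tau$; (2) every jump over $i$ by a $\text{LOW}(\tau)$ edge removes one element from $S_t \cap \text{LOW}(\tau)$ (the jumping edge $e_t$ had $t \le i < t' \le$ wait, $t' > i$, hmm — I need to recheck: a jump over $i$ means $t \le i < t'$, so the jumping edge was predicted to arrive \emph{after} $i$, not before; so it is \emph{not} in $S_{t-1}$). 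Let me instead flip the accounting: I would track edges predicted to arrive \emph{after} position $i$ that actually arrive at or before $i$.

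So the right potential is $\Phi_t = |\{e : \ind(e) > t,\ \widehat{\ind}_t(e) > i\}|$ restricted to $\text{LOW}(\tau)$ — no, that also isn't monotone. The cleanest route, which I would pursue, is this: a $\text{LOW}(\tau)$ edge $e_t$ jumps over $i$ only if $t \le i < t' = \widehat{\ind}_{t-1}(e_t)$, and since $e \in \text{LOW}(\tau)$ its \emph{original} predicted position satisfies $|\widehat{\ind}_0(e) - \ind(e)| \le \tau$, hence $\widehat{\ind}_0(e) \le t + \tau \le i + \tau$. But $\widehat{\ind}_{t-1}(e) > i$ while $\widehat{\ind}_0(e) \le i + \tau$; I would then show that between time $0$ and time $t-1$ the updated-prediction index of $e$ only moves by a bounded amount per non-$e$ arrival, and that each arriving edge shifts any fixed other edge's index by at most one slot. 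The number of jumps over $i$ by $\text{LOW}(\tau)$ edges whose original prediction was already $> i$ is at most... this needs the shift-by-one structure. The honest summary: I expect the main obstacle to be precisely this bookkeeping — showing that the updates $\hat\sigma_{t-1} \to \hat\sigma_t$ (each of which inserts one edge and deletes/truncates one edge, shifting a contiguous block by one) can move a fixed edge's predicted index across the threshold $i$ only "$\tau$ times total plus once per $\text{HIGH}$ edge." I would set up a charging argument: charge each $\text{LOW}(\tau)$-jump over $i$ either to one of the $\tau$ "slack" positions near $i$ that the edge's original prediction could have occupied, or to a $\text{HIGH}(\tau)$ edge arrival that pushed it across; a careful interval/counting argument then yields the $\tau + |\text{HIGH}(\tau)|$ bound for $\text{LOW}$ jumps, and adding the $\le |\text{HIGH}(\tau)|$ $\text{HIGH}$-jumps completes the proof.
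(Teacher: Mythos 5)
Your high-level decomposition is sound and partially matches the paper's: the observation that each $\text{HIGH}(\tau)$ edge jumps at most once (at its arrival) correctly accounts for at most $|\text{HIGH}(\tau)|$ of the jumps over $i$, and the target bound of $\tau + |\text{HIGH}(\tau)|$ for the remaining jumps is the right one to aim for. But the heart of the lemma is exactly the step you flag as ``the main obstacle'' and then leave unresolved. The missing ingredient is a \emph{drift invariant}: for every time $t$ and every edge $e \in \text{LOW}(\tau)$, one has $\widehat{\ind}_t(e) - \ind(e) \leq \tau + h_t$, where $h_t$ is the number of $\text{HIGH}(\tau)$ edges that have arrived by time $t$. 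The paper proves this by a minimal-counterexample induction: an edge's updated index increases by at most one per arrival, so the first violation at time $t$ must be caused by a $\text{LOW}(\tau)$ edge $e'$ jumping from a position \emph{beyond} $\widehat{\ind}_{t-1}(e)$ down to position $t$; but then $e'$ itself had drift $\widehat{\ind}_{t-1}(e') - \ind(e') \geq \widehat{\ind}_{t-1}(e) - \ind(e) + 2 > \tau + h_{t-1}$ at time $t-1$ (using $\ind(e) > t = \ind(e')$), contradicting minimality. Your sketched charging argument (``charge to one of $\tau$ slack positions or to a HIGH arrival'') does not substitute for this, because the slack available to a $\text{LOW}(\tau)$ edge is not just its original displacement $\tau$ --- it grows as other edges jump over it, and controlling that growth is precisely what the invariant does.

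Once the invariant is established, the paper's finish is slightly different from your per-edge classification and worth noting: let $t^*$ be the first time a $\text{LOW}(\tau)$ edge jumps over $i$, say from $t' > i$ to $t^* \leq i$. The invariant gives $i - t^* < t' - t^* \leq \tau + |\text{HIGH}(\tau)|$. Before $t^*$ only $\text{HIGH}(\tau)$ edges can jump over $i$ (at most $|\text{HIGH}(\tau)|$ jumps), after time $i$ no jump over $i$ is possible, and in between there is at most one jump per time step, i.e.\ at most $i - t^* + 1 \leq \tau + |\text{HIGH}(\tau)|$ jumps. This temporal accounting bounds \emph{all} jumps in the window $[t^*, i]$ at once rather than separately bounding LOW-caused jumps, but it delivers the same $\tau + 2|\text{HIGH}(\tau)|$ total. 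You should prove the drift invariant explicitly; without it your argument does not close.
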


\begin{proof}
    Fix $\tau$. 
    Let $h_t$ be the number of edges in $\text{HIGH}(\tau)$ that have arrived by time $t$. 
    To begin, we claim that at any time $t$, each edge $e \in \text{LOW}(\tau)$ is at most $\tau + h_t$ slots ahead of its true position, i.e., $\widehat{\ind}_{t}(e) - \ind(e) \leq \tau + h_t$.
    This is true at $t = 0$ by definition. 
    For sake of contradiction, consider the first time $t$ at which this condition does not hold for some edge $e$, i.e., $\widehat{\ind}_{t-1}(e) - \ind(e) \leq \tau + h_{t-1}$, but $\widehat{\ind}_{t}(e) - \ind(e) > \tau + h_t$. 
    Notice that at each time the position of an edge can increase by at most 1, so the above can only happen if $\widehat{\ind}_{t}(e) = \widehat{\ind}_{t-1}(e) + 1$ and $h_t = h_{t-1}$, meaning the edge $e'$ inserted at time $t$ is in $\text{LOW}(\tau)$. Since insertion of $e'$ changes the position of $e$, $e'$ must have jumped from a position greater than $\widehat{\ind}_{t-1}(e)$ to position $t$. 
    Thus $\widehat{\ind}_{t-1}(e')>\widehat{\ind}_{t-1}(e)$. 
    Also note that since $e'$ has jumped over $e$ in the updated predictions, it means that $e$ has not arrived yet, i.e., $\ind(e)>t=\ind(e')$.
    Therefore, 
    \[
    \widehat{\ind}_{t-1}(e') - \ind(e') \geq \widehat{\ind}_{t-1}(e) - \ind(e) +2 = \widehat{\ind}_{t}(e) - \ind(e)+1 > \tau + h_t = \tau + h_{t-1},\]
    which contradicts the choice of $t$. 
    
    Now, fix a position $i\in \{1,\ldots,m\}$. 
    At each time $t$, either no jump happens, or some edge jumps from position $t' > t$ to position $t$.
    Consider the first time $t^*$ an edge $e \in \text{LOW}(\tau)$ jumps over $i$.
    We want to show that $t^*$ cannot be ``much smaller" than $i$.
    Assume this jump is from position $t' > i$ to position $t^* \leq i$.
    So, the position of $e$ at time $t^*-1$ in the updated prediction is $t'$, i.e., $\widehat{\ind}_{t^*-1}(e)=t'$.
    Also, the actual position of $e$ is $t^*$, i.e., $\ind(e)=t^*$.
    We know that 
    \[t'-t^*=\widehat{\ind}_{t^*-1}(e)-\ind(e) \leq \tau + h_{t^*-1}\leq \tau+|\text{HIGH}(\tau)|,\]
    which means that $i-t^*<t'-t^*\leq \tau+|\text{HIGH}(\tau)|$.
    
    Before time $t^*$, all the edges that might have jumped over $i$ are in $\text{HIGH}(\tau)$. 
    So there are at most $|\text{HIGH}(\tau)|$ jumps over $i$ before time $t^*$.
    Also, after time $i$, no edge can jump over $i$.
    Therefore, the total number of jumps over $i$ is at most $|\text{HIGH}(\tau)| + (i-t^*+1) \leq \tau + 2|\text{HIGH}(\tau)|$.   
\end{proof}

We now prove the running time guarantees of the online algorithm.  

 \begin{lemma}\label{lem:pred_runtime}
    The online algorithm runs in time $\Tilde{O}\left(m \cdot \min\limits_{\tau} \left\{\tau + |\text{HIGH}(\tau)|\right\}\cdot \log W / \epsilon \right)$.
\end{lemma}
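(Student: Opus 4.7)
The plan is to decompose the total running time as a double sum over subproblems and their rebuilds, and to combine the frequency bound from Lemma~\ref{lem:jump} with the per-subproblem cost analysis inside the proof of Lemma~\ref{lem:offline-runtime}. The first step is to characterize exactly when a subproblem $y$ gets rebuilt: since the algorithm rebuilds the entire subtree rooted at the largest subproblem $[\ell_m, r_m]$ whose midpoint lies in the jump interval $[t, t'-1]$, a subproblem $y$ is rebuilt if and only if some midpoint on the path from the root to $y$ in the recursion tree lies in $[t, t'-1]$.

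The second step is to bound $R_y$, the number of times $y$ is rebuilt. The root-to-$y$ path carries at most $1 + \log m$ midpoints, and by Lemma~\ref{lem:jump} each fixed position lies in at most $\tau + 2|\text{HIGH}(\tau)|$ jump intervals (for any integer $\tau \geq 0$); a union bound gives $R_y = O(\log m \cdot (\tau + |\text{HIGH}(\tau)|))$ uniformly in $y$.

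For the per-rebuild cost, I would appeal directly to the proof of Lemma~\ref{lem:offline-runtime}: the construction of $G'_y$ and the Dijkstra call executed during a rebuild are literally the same operations analyzed there, so each rebuild of $y$ costs $O(m_y \log n \log\log n)$, measured in the alive edges $m_y$ of $y$ under the updated prediction $\hat{\sigma}_t$ currently in effect. Lemma~\ref{lem:offline-runtime} further establishes that, aggregated over all subproblems of any single offline run on any fixed $\hat{\sigma}$, this per-subproblem cost totals $\tilde{O}(m \log W/\epsilon)$. Multiplying the $R_y$ bound into this aggregate yields total cost $\tilde{O}\bigl(m(\tau + |\text{HIGH}(\tau)|)\log W/\epsilon\bigr)$, and taking the minimum over $\tau$ gives the claimed bound.

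The main obstacle is making this combination precise, because $\hat{\sigma}_t$ changes between rebuilds and so the alive-edge counts $m_y$ vary from one rebuild of $y$ to the next; one cannot simply quote a single-run offline aggregate. I would handle this by re-indexing the sum over jumps, writing the total cost as $\sum_{\text{jumps } j} c_{\text{subtree}(j)}$ where each $c_{\text{subtree}(j)}$ is a restriction of the offline cost on $\hat{\sigma}_{t_j}$, and then exploiting the tree structure: because each subproblem $y$ participates in the subtree of at most $R_y$ distinct jumps, double counting lets one pull the $\max_y R_y$ factor outside an offline-style aggregate to which Lemma~\ref{lem:offline-runtime} still applies.
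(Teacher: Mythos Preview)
Your plan mirrors the paper's proof almost exactly: the paper also observes that a subproblem is rebuilt only when the jump interval covers its own midpoint or that of an ancestor, notes there are at most $\log m$ such midpoints, applies Lemma~\ref{lem:jump} to bound the rebuild count of any single subproblem by $(\log m)(\tau + 2|\text{HIGH}(\tau)|)$, and then multiplies by the $\tilde O(m\log W/\epsilon)$ cost of one full offline pass from Lemma~\ref{lem:offline-runtime}. On the obstacle you flag---that $m_y$ depends on the current $\hat\sigma_t$ so one cannot literally quote a single-run aggregate---the paper is in fact \emph{less} careful than you are: it simply writes ``it takes $\tilde O(m\log W/\epsilon)$ time to rebuild all the subproblems once'' and multiplies by the rebuild bound, without addressing the varying sequences at all. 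Your double-counting sketch is at the same level of precision, so you are not missing anything the paper actually supplies on this point.

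You do, however, omit two cost components that the paper's proof handles explicitly. First, maintaining the distance array $D$: the paper charges each update of $D$ (which iterates over the alive vertices of subproblems in $\rebuild(t)$, or of the last rebuild of subproblem $t$ when nothing is rebuilt at time $t$) to the corresponding rebuild, so it contributes no asymptotic overhead. Second, updating the predicted sequence $\hat\sigma$ itself: each one-slot shift of an edge corresponds to a jump over some position, so the total shifting work is at most $\sum_{i=1}^{m}(\text{jumps over }i)\le m(\tau+2|\text{HIGH}(\tau)|)$ by Lemma~\ref{lem:jump}. Both pieces are short, but they are needed for the running-time statement to be complete.
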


\begin{proof}
    Each subproblem $[\ell,r]$ is rebuilt only if an edge jumps over its midpoint or the midpoint of one of its ancestors.
    Each subproblem has at most $\log m$ ancestors. 
    For each $\tau$, it follows from Lemma~\ref{lem:jump} that the subproblem $[\ell,r]$ is rebuilt at most $(\log m)(\tau + 2|\text{HIGH}(\tau)|)$ times.
    From Lemma~\ref{lem:offline-runtime}, we know it takes $\Tilde{O}(m \log W/ \epsilon)$ time to rebuild all the subproblems once. 
    Thus, the time it takes to do all the rebuilds in the online algorithm is
    $\Tilde{O}\left(m \cdot \min\limits_{\tau} \left\{\tau + |\text{HIGH}(\tau)|\right\}\cdot \log W / \epsilon \right)$.
    
    We must also account for the time required to maintain the distance array $D$.  
    Consider the updates on $D$ at time $t$. 
    If no subproblem gets rebuilt at time $t$, we only iterate through the alive vertices in $G_t$ when updating $D$. We can charge this cost to the cost of the last rebuild of subproblem $t$, as the last time subproblem $t$ was rebuilt, it had the same set of alive vertices as it has at time $t$. 
    Otherwise, in order to update $D$, we only iterate once through the alive vertices of a subset of the subproblems that get rebuilt at time $t$, and we can charge this cost to the cost of rebuild of these subproblems at time $t$. Note that this way, each subproblem that gets rebuilt throughout all the edge insertions gets charged at most once, which means that maintaining the distance array $D$ does not have any asymptotic overhead.
    
    Finally, we need to add the time needed to update the predicted sequence $\hat{\sigma}$. The total number of slots an edge $e \in \{e_1, \ldots, e_m\}$ is shifted by over all edges equals the total number of times a position $i \in \{1, \ldots, m\}$ is jumped over for all positions.  By Lemma \ref{lem:jump}, each position gets jumped over at most $\tau + 2|\text{HIGH}(\tau)|$ times for any $\tau$. Therefore, updating the predicted sequence takes $O(m \cdot (\tau + |\text{HIGH}(\tau)|))$ time for any $\tau$.
        
    Thus, the total runtime of the algorithm is $\Tilde{O}\left(m \cdot \min\limits_{\tau} \left\{\tau + |\text{HIGH}(\tau)|\right\}\cdot \log W / \epsilon \right)$.
\end{proof}

Theorem~\ref{thm:online} follows from  Lemma~\ref{lem:D} and the discussion in Section~\ref{sec:prelim} for the approximation guarantees and Lemma ~\ref{lem:pred_runtime} for the runtime.
\section{All Pairs Shortest Paths}

Our single-source  approach can be run repeatedly to approximate distances between all pairs of vertices.

Specifically, the goal is to preprocess $G_0, \ldots, G_m$ such that given $i$, $j$, and $t$, we can quickly find a $(1 + \epsilon)$-approximation of $d^t(i,j)$, the distance from $i$ to $j$ in $G_t$.  
We run the single source shortest path algorithm for each source $s\in V$, storing a separate data structure for each.  
This requires $\Tilde{O}(nm\log W/\epsilon)$ time and $\Tilde{O}(n^2 \log W / \epsilon)$ space, gives the following corollary.

\begin{corollary}
\label{cor:offline-approx-apsp}
    For the offline incremental all-pairs shortest-paths problem, there exists an algorithm running in total time $ O(nm\log(nW)\log^3 n \log\log n/\epsilon)$ that returns $(1+\epsilon)$ approximate shortest paths for each pair of vertices for each time $t$. 
\end{corollary}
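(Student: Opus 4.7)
}
The plan is to reduce the offline incremental APSP problem directly to $n$ independent instances of the offline incremental SSSP problem, one per choice of source, and then invoke Theorem~\ref{thm:offline-approx} as a black box.

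First, I would observe that the edge sequence $\sigma = e_1,\ldots,e_m$ and the vertex set $V$ are shared across all source choices: the graphs $G_0,\ldots,G_m$ do not depend on which vertex is designated the source. So for each vertex $s \in V$, I would run the offline algorithm of Section~\ref{sec:offline} on the instance $(\sigma, V, s)$, producing an independent data structure $\mathcal{D}_s$ that, for every time $t$ and every vertex $v$, answers queries of the form ``give a $(1+\epsilon)$-approximation of $d^t(s,v)$.'' To answer an APSP query $(i,j,t)$, I would simply query $\mathcal{D}_{v_i}$ with $(v_j, t)$.

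For the running time, by Theorem~\ref{thm:offline-approx} each SSSP data structure is built in worst-case total time $O(m \log(nW)(\log^3 n)(\log\log n)/\epsilon)$. Summing over the $n$ sources yields total preprocessing time $O(nm \log(nW)(\log^3 n)(\log\log n)/\epsilon)$, matching the claim. Correctness of each individual data structure, i.e., that the returned distance $d$ satisfies $d^t(v_i,v_j) \le d \le (1+\epsilon)\, d^t(v_i,v_j)$, follows immediately from the correctness guarantee of Theorem~\ref{thm:offline-approx} applied with source $s = v_i$.

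There is essentially no obstacle here: the SSSP algorithm is oblivious to which vertex is called the source, so the reduction is immediate and the bound is additive in $n$. The only minor point worth stating explicitly is that the $\epsilon$-adjustment described in Section~\ref{sec:prelim} is applied inside each SSSP invocation, so the final approximation factor across all $n$ data structures is uniformly $(1+\epsilon)$.
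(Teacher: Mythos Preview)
Your proposal is correct and matches the paper's approach exactly: the paper states just before the corollary that one runs the single-source algorithm once for each source $s\in V$, storing a separate data structure for each, and the $n$-fold blowup in running time gives the bound. There is nothing to add.
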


\paragraph{Online Learned APSP Algorithm.}  
For the online setting, we consider the worst-case update and query bounds.  In particular, the algorithm first preprocesses $\hat{\sigma}$.  Then, it obtains the edges from $\sigma$ one by one; the time to process any such edge is the update time.  At any time during this sequence of inserts the algorithm can be queried for the distance between any two vertices in the graph; the time required to answer the query is the query time.

We can immediately combine this idea with the techniques of van den Brand et al.~\cite[Theorem 3.1]{BrandFNP24} to obtain Theorem~\ref{thm:online-apsp}.  

\begin{proof}[Proof of  Theorem~\ref{thm:online-apsp}]
We briefly summarize the algorithm of van den Brand et al.; see the proof of Theorem 3.1 in~\cite{BrandFNP24} for the full details.  
To begin, we run the algorithm from Corollary~\ref{cor:offline-approx-apsp} on $\hat{\sigma}$ in $\Tilde{O}(nm\log W / \epsilon)$ time, so that for all $t,i,j$ we can find $\hat{d}^t(i,j)$ in $O(\log \log_{1 + \epsilon} (nW))$ time.  

On a query at time $t$ for vertices $i$ and $j$, we do the following.  Let $t'$ be the latest time such that all edges that were predicted to appear by time $t'$ in $\hat{\sigma}$ have actually appeared (in $\sigma$) by time $t$.  Notice that if $\hat{\sigma}$ is a permutation of $\sigma$, then $t - t' \leq \eta$ (see the discussion immediately after Theorem 3.1 in~\cite{BrandFNP24} for a formal proof).  
Let $E'$ be the set of all edges that have arrived by time $t$ minus the edges that were predicted to arrive by time $t'$. Then, $|E'| = t - t' \leq \eta$. 
We construct $G'$ with a set of vertices equal to all vertices in any edge in $E'$, plus $i$ and $j$.  The graph $G'$ is complete.  For each edge $e = (u,v)$ in $G'$, the weight of the edge is the minimum of: (1) the weight of any edge from $u$ to $v$ in $E'$, and (2) $\hat{d}^{t'}(u,v)$.  We run Dijkstra's algorithm to obtain the distance from $i$ to $j$ in $G'$; this answers the query.  Constructing $G'$ and running Dijkstra's algorithm can be done in time $O(\eta^2 \log\log_{1 + \epsilon} (nW))$.

Let us briefly explain why this works.  
First, let's show that the answer is at least $d^t(i,j)$.  
Consider the shortest path $P$ from $i$ to $j$ in $G'$.  
Each edge $e = (u,v)$ in $P$ is one of two types: either an edge in $E'$, or an edge with weight $\hat{d}^{t'}(u,v)$.  The edge with weight $\hat{d}^{t'}(u,v)$ upper bounds the length of the shortest path between $u$ and $v$ over edges that were predicted to arrive by $t'$.  All such edges are in $G_t$ by definition of $t'$.  
Thus, by concatenating the edges from $P$ in $E'$ and the subpaths corresponding to the edges in $P$ not in $E'$, we obtain a path $P'$ in $G_t$ with weight at most that of $P$.
Now, we show the answer is at most $(1 + \epsilon)d^t(i,j)$.
Consider the shortest path $Q$ from $i$ to $j$ in $G_t$.  
We can decompose $Q$ into a sequence of subpaths, each of which is either an edge in $E'$, or consists of a sequence of edges not in $E'$.  For each edge and subpath, there is an edge in $G'$ with weight at most $(1+\epsilon)$ larger by definition; concatenating these edges we obtain a path $Q'$ in $G'$ of weight at most $(1 + \epsilon)$ larger.

With queries in mind, there are two goals for updates: we must be able to construct $E'$ efficiently, and we must maintain $t'$. 
We can store the inserted edges in a balanced binary-search tree, which takes $O(\log n)$ time per insert to maintain. At each time $t$, the BST contains $\widehat{\ind}(e)$ for all edges $e$ that have arrived by time $t$. This allows us to find $t'$ and construct $|E'|$ in $O(\log n)$ and $O(|E'| \log n) = O(\eta \log n)$ time, respectively (see the discussion in~\cite[Theorem 3.1]{BrandFNP24}).
\end{proof}
\section{Conclusion}

Learned data structures have been shown to have strong empirical performance and have the potential to be widely used in systems.  There is a need to develop an algorithmic foundation on how to leverage predictions to speed up worst-case data structures.  

In this paper, we build on this recent line of work, and provide new algorithmic techniques to solve the fundamental problem of
single-source shortest paths in incremental graphs.  As our main result, we design an ideal (consistent, robust, and smooth) algorithm for this problem.  
Our algorithm is optimal (up to log factors) with perfect predictions and circumvents the high worst-case update cost of state-of-the-art solutions even under reasonably accurate predictions. 

\bibliographystyle{plain}
\bibliography{refs}

\end{document}